\pgfplotsset{compat=1.17}
\newcommand{\eps}{\varepsilon}
\newcommand{\e}{\mathbb{E}}
\newcommand{\p}{\mathbb{P}}
\newcommand{\pf}{p^{ f}}
\newcommand{\px}{p^{\eps}}
\newcommand{\pa}{p^{s}}
\newcommand{\po}{p^{\textsc{Opt}}}
\newcommand{\ou}{\textsc{Opt}}
\newcommand{\TV}{\mathrm{TV}}
\newcommand{\emix}{\varepsilon\textsc{-Mix}}
\newcommand{\Sol}{\mathcal{S}}
\newcommand{\smix}{\textsc{Simple-Mix}}
\newcommand{\ta}{T^\beta}
\newcommand{\taa}{T^\alpha}
\newcommand{\tz}{T^0}
\newcommand{\wa}{w^{\alpha}}
\newcommand{\paa}{p^{\alpha}}
\newcommand{\paac}{\tilde{p}^{\alpha}}
\newcommand{\mcl}{\mathcal{M}_\lambda}
\newcommand{\rrq}{\textsc{RandomReplace}}
\newcommand{\km}{\textsc{Kmeans++}}
\newcommand{\mmb}{\textsc{Max-Matching}}
\newcommand{\grd}{\textsc{Greedy}}
\newcommand{\instance}{\mathcal{I}(V,\pf, \mathcal{M}_\lambda, \alpha)}
\theoremstyle{plain}
\newtheorem{theorem}{Theorem}[section]
\newtheorem{lemma}[theorem]{Lemma}
\newtheorem{claim}[theorem]{Claim}
\theoremstyle{definition}
\theoremstyle{remark}
\newtheorem{remark}[theorem]{Remark}
\DeclareMathOperator*{\argmax}{arg\,max}
\definecolor{RePu01}{RGB}{252, 197, 192}
\definecolor{RePu02}{RGB}{250, 159, 181}
\definecolor{RePu03}{RGB}{247, 104, 161}
\definecolor{RePu04}{RGB}{221, 52, 151}
\definecolor{RePu05}{RGB}{174, 1, 126}
\definecolor{RePu06}{RGB}{73, 0, 106}
\DeclareTextFontCommand{\fet}{\fontfamily{lmss}\selectfont} 
\DeclareTextFontCommand{\ralg}{\fontfamily{cmtt}\selectfont} 
\title{Optimally Interpolating between Ex-Ante Fairness and Welfare}
\author{Mikael Høgsgaard \and Panagiotis Karras \and Wenyue Ma \and  Nidhi Rathi \and Chris Schwiegelshohn} \date{Department of Computer Science, Aarhus University\\ {\AA}bogade 34, 8200 Aarhus N, Denmark\\ Email: \url{{hogsgaard, panos, wenyuema, nidhi, schwiegelshohn }@cs.au.dk}}
\begin{document}

\maketitle

\begin{abstract}
For the fundamental problem of allocating a set of resources among individuals with varied preferences, the quality of an allocation relates to the degree of fairness and the collective welfare achieved. Unfortunately, in many resource-allocation settings, it is computationally hard to maximize welfare while achieving fairness goals. 

In this work, we consider ex-ante notions of fairness; popular examples include the \emph{randomized round-robin algorithm} and \emph{sortition mechanism}. We propose a general framework to systematically study the \emph{interpolation} between fairness and welfare goals in a multi-criteria setting. We develop two efficient algorithms ($\emix$ and $\smix$) that achieve different trade-off guarantees with respect to fairness and welfare. $\emix$ achieves an optimal multi-criteria approximation with respect to fairness and welfare, while $\smix$ achieves optimality up to a constant factor with zero computational overhead beyond the underlying \emph{welfare-maximizing mechanism} and the \emph{ex-ante fair mechanism}. Our framework makes no assumptions on either of the two underlying mechanisms, other than that the fair mechanism produces a distribution over the set of all allocations. Indeed, if these mechanisms are themselves approximation algorithms, our framework will retain the approximation factor, guaranteeing sensitivity to the quality of the underlying mechanisms, while being \emph{oblivious} to them. 
We also give an extensive experimental analysis for the aforementioned ex-ante fair mechanisms on real data sets, confirming our theoretical analysis.
\end{abstract}

\section{Introduction}
Since the dawn of civilization, human beings have been living in a social construct that necessarily requires making group decisions based on possibly varied individual preferences. For problems like allocating limited natural resources, managing airport traffic, matching markets, and assigning courses to students, it is desirable to have fairness on an individual level while achieving global welfare guarantees. The theory of fair division focuses on the fundamental problem of allocating a set of resources among a set of participating individuals with distinct preferences. Such resource-allocation settings have spawned a flourishing line of research over multiple decades in economics, mathematics, and computer science; see \cite{brams1996fair,robertson1998cake,procaccia2015cake} for excellent expositions. 

Fairness and economic efficiency are two pivotal goals for various allocation problems, and therefore several important notions of fairness and economic welfare have been studied extensively in the literature \cite{bouveret2016characterizing,budish2011combinatorial}. Nash social welfare and egalitarian social welfare are considered to capture important measures of economic efficiency \cite{moulin2004fair,caragiannis2019unreasonable}.  The history of fair division has seen a lot of impressive existential as well as hardness results.  That is, we have existential results for meaningfully fair allocations of resources in quite general settings \cite{stromquist1980cut}, but on the other hand, we don’t have their algorithmic counterparts. In many cases, there are indeed hardness or impossibility results that rule out polynomial-time algorithms for finding such allocations \cite{stromquist2008envy}. And not surprisingly, for various allocation settings, the problem of simultaneously achieving fairness and maximizing welfare is difficult as well \cite{nguyen2013survey,shavell2002fairness,roos2010complexity,arunachaleswaran2019fair}.

Consider the setting of \emph{fairly} allocating indivisible items where each item needs to be allocated to a single agent. In a scenario where a single valuable item  needs to be allocated among multiple agents, it is impossible to achieve the classical fairness notion of \emph{envy-freeness}\footnote{An allocation of resources is envy-free if every agent prefers her own share over that of any other agent's share.}. This non-existence has led to significant research on fairness in deterministic allocations of indivisible goods in economics and computer science \cite{bouveret2016fair}, a vast majority of which focuses on relaxed fairness and welfare properties (see the recent survey by \cite{survey2022}). Randomization plays an important role in achieving ex-ante fairness, explored in recent works \cite{aleksandrov2015online,freeman2020best,aziz2020simultaneously,caragiannis2021interim}. In the above example with one item, assigning it uniformly at random to an agent, we achieve \emph{ex-ante envy-freeness} \cite{freeman2020best}, i.e. prior to sampling an agent, no agent has a reason to envy the others as each has an equal chance of receiving the item. 
Hence, studying ex-ante notions of fairness has become a significant line of research. Popular mechanisms like randomized round-robin\footnote{In a randomized round-robin mechanism, agents come in a uniformly random order and act greedily by choosing their most preferred item from the pool of remaining items.} and sortition-based algorithms\footnote{The notion of \emph{sortition} entails a random selection of representatives from the population providing an alternative method to democracy. Here, the goal is to select a panel that is representative of the population such that individuals would ideally be selected to serve on this panel with equal probability.} \cite{freeman2020best,ebadiansortition,flanigan2021fair} explore ex-ante fairness notions in varied settings.  Motivated by the variety, we assume that a \emph{fair mechanism} can be expressed as a random variable over the space of all solutions (i.e., allocations) and samples can be drawn from it. We will refer to the distribution defined by the fair mechanism as a \emph{fair prior}. We measure the welfare of a randomized allocation as the expected welfare it achieves. Given a \emph{fair mechanism} on one hand and a \emph{welfare-maximizing mechanism} on the other, a natural question that arises is how to compute allocations that retain \emph{closeness} to both fairness and welfare guarantees.

In this paper, we examine whether there exists a natural way to formulate the trade-off between the two extremes of fairness and welfare in resource-allocation settings. That is, given the characterizations of two mechanisms, one that defines the fair prior and the other that is welfare-maximizing, is there a computationally efficient way to meaningfully \emph{interpolate} between the two solutions of fairness and welfare? 

 \noindent
\subsection*{Our Contribution:}
We propose a general framework to systematically study the interpolation between the two pivotal goals of achieving fairness and maximizing welfare for allocation problems. We define an instance of \emph{fair-to-welfare interpolation} (FWI) to consist of a fair prior distribution over allocations and a welfare-maximizing mechanism for a given allocation problem. 
The goal is to find a new mechanism that (i) is close to the fair prior with respect to some distance function while (ii) maximizing welfare.

For any desired choice of $\alpha\in (0,1)$, we consider an algorithm to be $\alpha$-fair if the total variation distance between the output distribution of the algorithm and the fair prior may be upper bounded by $\alpha$. 



We develop two computationally-efficient algorithms $\emix$ and $\smix$ for the above problem of fair-to-welfare interpolation. 
The first algorithm $\emix$ essentially achieves an optimal multi-criteria approximation with respect to fairness and welfare with a small number of samples drawn from the fair prior. Our second algorithm $\smix$ is even simpler with zero computational overhead. It matches the performance of $\emix$ up to constant factors while satisfying an \emph{ancillary individual fairness} property (discussed in detail in Section~\ref{sec:smix-ind}).
We summarize our main results as follows:

\begin{itemize}
    \item For any instance of FWI, our algorithm $\emix$ draws $O(1/\eps^2)$-many samples from the given fair prior and achieves  optimal welfare while being $\alpha$-fair.
    \item   For any instance of FWI, our algorithm $\smix$ requires a single sample from the fair prior to be within a constant factor from the optimal welfare while being $\alpha$-fair.
\end{itemize}

Our framework neither makes any assumptions about the underlying fair mechanism, other than that it produces a distribution over allocations, nor any assumptions about the welfare-maximizing mechanism. 
It is relevant to note that both of our algorithms are \emph{oblivious} to the welfare-maximizing mechanism. Indeed, if this mechanism is itself an approximation algorithm, our framework will retain the approximation factor for any degree of interpolation, guaranteeing sensitivity to the quality of the underlying algorithms, while being oblivious to them. Furthermore, all of our results hold for \emph{any} welfare function and \emph{any} fair prior over the solution space.

We test and complement our theoretical results on popular ex-ante notions of fairness such as randomized round-robin and sortition on real data sets.\\

\noindent
\subsection*{Additional Related work:}
Previous works have studied various notions of ex-post fairness along with welfare guarantees \cite{caragiannis2019unreasonable,aziz2020polynomial,arunachaleswaran2019fair,barman2020optimal} for the indivisible setting. Due to the hardness of computing such allocations, it is essential to relax either the fairness or the welfare goals (and sometimes maybe both). Hence, exploring ex-ante notions of fairness and welfare becomes relevant. In the economics literature, the first work to introduce the idea of finding a fractional allocation and implementing it as a lottery over pure assignments is by Hylland and Zeckhauser \cite{hylland1979efficient}. Randomness has become an important design tool for mechanisms for solving allocation problems. Subsequent works implemented this idea in various settings. Mechanisms such as the probabilistic serial rule \cite{bogomolnaia2001new} consider randomness to avoid any bias towards the agents and achieve ex-ante fairness guarantees for all cardinal valuations that are consistent with the ordinal preferences. Other such ex-ante-based mechanisms in this line of research include random priority \cite{abdulkadirouglu1998random}, vigilant eating \cite{aziz2022vigilant}, and several extensions of the probabilistic serial rule \cite{budish2013designing}.

The works of Freeman et al. \cite{freeman2020best} and Aziz \cite{aziz2020simultaneously} study the possibility of achieving ex-ante and ex-post fairness guarantees simultaneously in the classical fair allocation setting along with Pareto efficiency.  There are works in other related areas including voting \cite{aziz2019probabilistic} and two-sided matching problems \cite{budish2013designing,chen2002improving}, that have also explored the concept of randomization to circumvent the non-existence of various solutions concepts.

\section{Notation and Framework}
Consider a solution space $\Sol$ and a non-negative function $V:\Sol \rightarrow \mathbb{R}^{+}_0$ defined on $\Sol$. We say that the function $V$ assigns a \emph{value} or \emph{welfare}, $V(i)$ to a solution $i \in \Sol$. Further, we will use $\ou$ to denote a solution in $S$ with maximum welfare i.e. $\ou=\argmax_{i\in\Sol}V(i)$. 


In what follows we will use $\mcl$ to denote a welfare-maximizing mechanism with an approximation factor $\lambda$ with respect to $V(\ou)$. We will denote $A \in \Sol$ to be the solution provided by $\mcl$, i.e, $V(A)\geq \lambda V(\ou)$. 

Furthermore, we consider probability distribution vectors  $p =\{p_i\}_{i \in \Sol}$ over the solution space. Here $p_i \in [0,1]$ specifies the probability assigned to solution $i \in \Sol$. We write $\Delta:=\{p =\{p_i\}_{i \in \Sol}: p_i \in [0,1], \sum_i p_i=1 \}$ to denote the set of all distributions vectors over the solution space $\Sol$. For $p \in \Delta$, we linearly extend the definition of the value function and write $V(p)=\sum_{i\in \Sol} p_i \cdot V(i)$. That is, $V(p)$ is the expected value of a random variable that takes values in the solution space $\Sol$ with distribution $p$. 

We assume that a \emph{fair mechanism} can be expressed as a random variable over the space of all solutions and we call the corresponding distribution vector $\pf \in \Delta$ the \emph{fair prior}.
Given a distribution vector $p \in \Delta$ and a distribution vector $\pf$ of the fair mechanism, we model the distance of $p$ to the fair prior $\pf$ using the standard notion of total variation distance. That is, for a fair prior $\pf$ and an  $\alpha \in [0,1]$, we say that $p \in \Delta$ is $\alpha$-\emph{fair} if  $$\TV(p,\pf):= \frac{1}{2}  \sum_{i\in \Sol}  |p_i-\pf_i| \leq \alpha.$$

We now define an instance of \emph{fair-to-welfare interpolation} (FWI) with the following quadruplet $\instance$, where $V$ is a value function, $\pf$ a fair prior which can be accessed via sampling, $\mcl$ a welfare-maximizing mechanism, and $\alpha\in (0,1)$. Given an FWI instance, the goal is to develop an algorithm whose output has a distribution $p$, such that $p$ is $\alpha$-fair and maximizes the value function $V$. In other words, we aim to solve the following constrained maximization problem, whose optimal solution we will denote $\po$,
\begin{gather*}
\qquad \max_{p \in \Delta} V(p) \\
\text{such that} \ \TV(p,\pf)\leq \alpha.
\end{gather*}

\section{Interpolating Between Fairness and Welfare }
In this section, we develop a computationally-efficient algorithm, called $\emix$ that, for any instance of FWI, essentially achieves an optimal multi-criteria approximation with respect to fairness and welfare with a small number of samples drawn from the given ex-ante fair mechanism. Our main result of this section is stated in Theorem~\ref{thm:emix}.

\begin{theorem}\label{thm:emix}
Consider an instance  $\instance$ of FWI along with an approximation factor $\eps>0$ and let $\px$ be the distribution of the output of Algorithm~\ref{alg:cap}. Then, $\px$ is $\alpha$-fair and obtains the following welfare guarantees
\begin{align*}
V(\px) \geq \lambda \cdot (1-\eps) \cdot V(\po).
\end{align*}
using $O(1/\eps^2)$-many samples drawn from the fair prior. 
\end{theorem}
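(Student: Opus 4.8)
The plan is to solve the optimization defining $\po$ exactly, compare an \emph{oblivious} variant of it (which only ever moves mass onto the solution $A$ returned by $\mcl$) against $\po$, and then show that the sampled distribution $\px$ produced by Algorithm~\ref{alg:cap} loses only a $(1-\eps)$ factor relative to that variant. I would begin with an exchange argument characterizing $\po$. Since $\TV(p,\pf)\le\alpha$ says exactly that $p$ is obtained from $\pf$ by transporting a total probability mass of at most $\alpha$, the welfare $V(p)=\sum_i p_i V(i)$ is maximized by stripping mass from the solutions of least $V$-value and depositing all of it onto the single most valuable solution $\ou$. Writing $R\ge 0$ for the $V$-value of the removed lowest $\alpha$-mass, this gives the clean identity $V(\po)=V(\pf)+\alpha V(\ou)-R$, together with $0\le R\le V(\pf)$ (removed probabilities are bounded by $\pf_i$) and the trivial $V(\po)\le V(\ou)$.

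Next I would define the oblivious ideal $p^*$ performing the \emph{same} displacement of the lowest $\alpha$-mass but depositing it onto $A$ instead of $\ou$. By linearity $V(p^*)=V(\pf)+\alpha V(A)-R$, and transporting a total mass $\alpha$ keeps $\TV(p^*,\pf)\le\alpha$, so $p^*$ is $\alpha$-fair. The heart of the argument is then a one-line comparison using $V(A)\ge\lambda V(\ou)$:
\[
V(p^*)-\lambda V(\po)=(1-\lambda)\bigl(V(\pf)-R\bigr)+\alpha\bigl(V(A)-\lambda V(\ou)\bigr)\ge 0,
\]
since $\lambda\le 1$ and $V(\pf)-R\ge 0$. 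Thus the oblivious ideal already attains $V(p^*)\ge\lambda V(\po)$ with no loss, and the role of $\eps$ is purely to account for not knowing $\pf$ exactly. (In the degenerate regime where the lowest $\alpha$-mass is itself more valuable than $A$, no displacement helps; but then every value exceeds $\lambda V(\ou)$, so $V(\pf)>\lambda V(\ou)\ge\lambda V(\po)$ and simply retaining $\pf$ suffices.)

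Finally I would realize $p^*$ from samples. The only datum $p^*$ requires is the value-threshold $\tau$ (the $\alpha$-quantile of $V$ under $\pf$) separating the displaced from the retained mass, and the algorithm estimates it from $N=O(1/\eps^2)$ i.i.d.\ draws. A Hoeffding/DKW-type concentration bound shows the empirical cumulative distribution of $V(Z)$, $Z\sim\pf$, is uniformly within $O(\eps)$ of the true one, so the true mass redirected onto $A$ lies in $[\alpha-O(\eps),\alpha]$; choosing the empirical quantile slightly conservatively guarantees the redirected mass never exceeds $\alpha$, preserving $\alpha$-fairness against the \emph{true} $\pf$. It then remains to convert this additive error in transported mass into a multiplicative $(1-\eps)$ welfare loss, giving $V(\px)\ge(1-\eps)V(p^*)\ge\lambda(1-\eps)V(\po)$.

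I expect the main obstacle to be precisely this last conversion: a small additive error in the redirected mass must be charged against $V(\po)$ rather than against the possibly far larger scale $V(\ou)$, and this must hold uniformly over arbitrary value functions and arbitrary $\alpha$. The key estimate is that redirecting $\delta$ less mass costs at most $\delta$ times the value-gap across $\tau$, and this gap must be controlled in terms of the welfare actually gained by the redistribution (i.e.\ in terms of $V(p^*)-V(\pf)$), so that the deficit is genuinely a $(1-\eps)$-fraction of the achieved welfare and not an uncontrolled additive term.
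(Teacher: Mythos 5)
Your first half---the exchange characterization of $\po$ and the comparison of the oblivious ideal $p^*$ against it---is correct, and is essentially the paper's own comparison step in a cleaner form: with $R=\alpha V(\paac)$ in the paper's notation, your identity
\[
V(p^*)-\lambda V(\po)=(1-\lambda)\bigl(V(\pf)-R\bigr)+\alpha\bigl(V(A)-\lambda V(\ou)\bigr)\ge 0
\]
carries the same content as the paper's observation that $(b+x)/(a+x)\ge b/a$ for $a\ge b\ge 0$, $x\ge 0$, applied to $V(\po)=\alpha V(\ou)+(1-\alpha)V(\paa)$. Your parenthetical about the ``degenerate regime'' is unnecessary: the identity holds there too, precisely because the benchmark is $\lambda V(\po)$ rather than $V(\pf)$.

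The genuine gaps are in the sampling realization. First, fairness: you claim that a conservatively chosen empirical quantile ``guarantees the redirected mass never exceeds $\alpha$.'' No finite-sample rule can guarantee this almost surely; a DKW/Hoeffding bound gives it only with high probability, and since $\px$ is the output law averaged over \emph{all} randomness (including the estimation sample), the failure event contributes to $\TV(\px,\pf)$ and must be budgeted for explicitly (as must atoms of the value distribution, at which no threshold splits off exactly $\alpha$ mass). The paper's Algorithm~\ref{alg:cap} avoids this entirely by design: it first flips an $\alpha$-coin to decide whether to output $A$, and otherwise removes the bottom $\alpha s$ mass from the drawn sample itself; the else-branch output law, scaled by $(1-\alpha)$, is then coordinatewise at most $\pf$, so $\TV(\px,\pf)\le\alpha$ holds \emph{exactly for every sample size} (Lemma~\ref{lem:fair})---fairness never rests on concentration. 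Second, the welfare conversion you yourself flag as the main obstacle is left open, and the route you hint at---controlling the value gap across the threshold by $V(p^*)-V(\pf)$---fails: that quantity equals $\alpha V(A)-R$ and can be arbitrarily small or even negative (exactly when near-threshold values approach or exceed $V(A)$), so it cannot absorb the deficit. What does work is charging the deficit $\delta V(A)$ of redirecting $\delta$ less mass against $\alpha V(A)\le V(p^*)$, which forces CDF accuracy $\delta\le\eps\alpha$ and hence $O(\log(1/\eps)/(\eps\alpha)^2)$ samples---still $O(1/\eps^2)$ for fixed $\alpha$, but with a $1/\alpha^2$ dependence the paper does not pay: its sample count $s=O\bigl(\log(1/\eps)/((1-\alpha)\eps^2)\bigr)$ comes from a Hoeffding bound on how many of the $s$ draws land in the top $(1-\alpha)$-part of $\pf$ (Claim~\ref{cor:valuepx}), not from quantile estimation. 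Finally, note that the theorem is a statement about Algorithm~\ref{alg:cap} specifically; even after patching, your argument establishes the guarantee for a different algorithm, not for the one the statement refers to.
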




 The key idea of our algorithm $\emix$ is to redistribute mass from solutions with low welfare in the support of ex-ante fair prior $\pf$ over to the solution provided by the welfare-maximizing mechanism (with approximation factor $\lambda$). 
 The issue with doing this greedily and explicitly is that it is typically computationally infeasible: the solution space may be exponential in the size of the input and even if this is not considered an obstacle, it is computationally infeasible to exactly determine the probability of a mechanism outputting a given solution.

 Instead, we only access the fair prior distribution $\pf$ by independently drawing samples from it. For a given FWI instance, $\emix$ begins with flipping a coin that returns heads with a probability $\alpha$. If the coin flip comes up heads, then it outputs the solution provided by the welfare-maximizing mechanism $\mcl$. Otherwise, it gathers an ensemble of samples from the sampling mechanism and picks a solution that is among the top $(1-\alpha)$-fraction of the highest-valued solutions. Observe that the above-described scheme is a natural way of mimicking the process of moving mass from the $\alpha$-fraction of the smallest-valued solutions in the fair prior $\pf$ to the solution of $\mcl$, only using the sampling mechanism. However, when we are applying this sampling approach we can not guarantee that these $(1-\alpha)$-fraction of the highest-valued solutions of the ensemble are those in the top $(1-\alpha)$-fraction of the highest-valued solution in the support of $\pf$ as well. However, we show that using $O(\eps^{-2})$ many samples, $\emix$ attains a welfare value that is $(1-\eps)$-close to the optimal, i.e., the value we could have obtained with the explicit knowledge of $\pf$. We now present $\emix$ formally in Algorithm~\ref{alg:cap}.
{
\begin{algorithm}[ht]
{
\caption{Fairness-to-welfare interpolation ($\emix$)}\label{alg:cap}
 {\bf Input:} An instance $\instance$ of FWI, and an approximation factor $\eps>0$. \\
 {\bf Output:} A solution $i \in \Sol$ drawn according to a distribution $\px \in \Delta$ such that $\TV(\px,\pf) \leq \alpha$ and $V(\px)\geq \lambda \cdot (1-\eps)  \cdot V(\po)$. 
\begin{algorithmic}[1]
\STATE Flip a coin that returns heads with probability $\alpha$.
\IF{the coin flip = heads}
\STATE Run $\mcl$ and denote its output by $A \in \Sol$.
   \RETURN $A$
\ELSE \label{alg:else}
   \STATE{Sample $s:=\left\lceil 8((1-\alpha)\eps^2)^{-1}\log(2/\eps)\right\rceil$ many solutions according to the fair prior $\pf$, and  let $Z_s^{()}$ be the vector consisting of these solutions sorted in decreasing order of value.}
   \STATE{Assign one unit of mass, $w_i=1$ to each entry $i\in [s]$ of $Z_s^{()}$.}
   \STATE{Remove a total of $\alpha s$ mass starting from the last entry of $Z_s^{()}$ and moving upwards. Denote the updated weight vector by $w^{\alpha}$}.
   \RETURN  a solution from $Z_s^{()}$ chosen with probability proportional to the weight vector $w^{\alpha}$.
\ENDIF
\end{algorithmic}
}
\end{algorithm}
}

We now give our analysis of $\emix$ proving that it is $\alpha$-fair and achieves the claimed $\lambda(1-\eps)$ approximation factor compared to the optimal $V(\po)$. We will prove Theorem~\ref{thm:emix} using  the following two lemmas. The first lemma proves that $\emix$ is $\alpha$-fair.
\begin{lemma}\label{lem:fair}
For any instance,  $\instance$ of FWI, let $\px$ denote the distribution of Algorithm~\ref{alg:cap}'s output. Then, we have
\begin{align*}
  \TV(\px,\pf) = \frac{1}{2}\sum_{i} |\px_i-\pf_i|\leq \alpha.
 \end{align*}
\end{lemma}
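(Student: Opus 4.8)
The plan is to write $\px$ as an explicit two-component mixture dictated by the coin flip in Line~1 of Algorithm~\ref{alg:cap}, and then to reduce the claim $\TV(\px,\pf)\le\alpha$ to a single pointwise comparison between the else-branch distribution and $\pf$. Conditioning on the coin: with probability $\alpha$ the algorithm returns $\mcl$'s output, whose distribution I write as $p^A$ (a point mass at $A$ when $\mcl$ is deterministic), and with probability $1-\alpha$ it returns a sample from the else branch, whose conditional output distribution is $\pz$. Thus $\px_i=\alpha\,p^A_i+(1-\alpha)\,\pz_i$ for every $i\in\Sol$. Since both $\px$ and $\pf$ are probability vectors, $\sum_i(\px_i-\pf_i)=0$, so the positive and negative deviations coincide and $\TV(\px,\pf)=\sum_{i\in\Sol}(\px_i-\pf_i)^+$, where $x^+=\max\{x,0\}$. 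This lets me ignore all coordinates where $\px_i\le\pf_i$ and focus only on the excess mass.

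The crux — and the step I expect to require the most care — is the pointwise domination
\begin{align*}
(1-\alpha)\,\pz_i\;\le\;\pf_i\qquad\text{for all }i\in\Sol.
\end{align*}
To prove it I would unfold the else branch: it draws i.i.d.\ samples $X_1,\dots,X_s$ from $\pf$, sorts them by decreasing value, and removes a total mass of $\alpha s$ from the bottom to obtain the weight vector $\wa$, returning the sample at sorted position $j$ with probability $\wa_j/\big((1-\alpha)s\big)$; the denominator is exactly the remaining total mass $s-\alpha s$. Expressing $\pz_i$ as the expectation over the sample vector of $\sum_{j\text{ holding }i}\wa_j/((1-\alpha)s)$, the key observation is that mass is only ever removed, so $0\le\wa_j\le1$ for every position, whatever the realized order and however ties in value are broken. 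Capping $\wa_j\le1$ and using linearity of expectation yields $\pz_i\le\frac{1}{(1-\alpha)s}\,\mathbb{E}\big[\#\{k:X_k=i\}\big]=\frac{1}{(1-\alpha)s}\cdot s\,\pf_i=\frac{\pf_i}{1-\alpha}$, which is the claim. The subtlety to handle is precisely that this must hold for \emph{every} sample realization and \emph{every} tie-breaking rule; it does, because it rests only on the uniform cap $\wa_j\le1$ together with the exact normalization $(1-\alpha)s$.

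Finally I would assemble the pieces. The domination bound gives $\px_i-\pf_i=\alpha\,p^A_i+(1-\alpha)\pz_i-\pf_i\le\alpha\,p^A_i$ for every $i$, and since $\alpha\,p^A_i\ge0$ the positive part is monotone, so $(\px_i-\pf_i)^+\le\alpha\,p^A_i$. Summing over $\Sol$,
\begin{align*}
\TV(\px,\pf)=\sum_{i\in\Sol}(\px_i-\pf_i)^+\le\sum_{i\in\Sol}\alpha\,p^A_i=\alpha,
\end{align*}
using that $p^A$ is a probability distribution. This proves the lemma. It is worth noting that the bound is exact and carries no dependence on $\eps$ or the sample count $s$: the sampling randomness affects only the welfare guarantee, never the fairness one, which is what makes the mixture decomposition above the right way to separate the two concerns.
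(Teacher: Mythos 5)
Your proof is correct and takes essentially the same route as the paper's: both decompose $\px$ as the $\alpha$-coin mixture, and your pointwise domination $(1-\alpha)\pz_i \leq \pf_i$ is exactly the paper's key fact that the reweighting satisfies $\wa_j \leq w_j = 1$ while the else branch normalizes by the residual mass $(1-\alpha)s$. The only differences are bookkeeping: the paper bounds $|\px_i - \pf_i|$ via the triangle inequality and sums to $2\alpha$, whereas you use the one-sided identity $\TV(\px,\pf)=\sum_i(\px_i-\pf_i)^+$ to reach $\alpha$ directly, and your general $p^A$ also covers a randomized $\mcl$, which the paper's point-mass notation $\mathbbm{1}_{i=A}$ implicitly assumes away.
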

Observe that Lemma~\ref{lem:fair} does not depend on the number of samples that $\emix$ draws from the fair prior, i.e. $\emix$ is always $\alpha$-fair. Our next lemma says that the welfare $V(\px)$ of the output distribution is at least $\lambda(1-\eps)$ times the optimal $V(\po)$.
\begin{lemma}\label{lem:aproxemix}
Consider an instance  $\instance$ of FWI. For any $\eps>0$, let $\px$ be the distribution of the output of Algorithm~\ref{alg:cap} obtained by drawing $O(1/\eps^2)$-many samples from the fair prior $\pf$. Then, we have the following welfare guarantee,
\begin{align*}
V(\px) \geq \lambda \cdot (1-\eps) \cdot V(\po).
\end{align*}
\end{lemma}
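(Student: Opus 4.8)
The plan is to reduce the welfare guarantee to a single concentration statement about the empirical ``keep the top $(1-\alpha)$ fraction'' operation, after pinning down the exact structure of the constrained optimum $\po$. First I would characterize $V(\po)$ by an exchange argument on the defining linear program: writing any feasible $p=\pf+d^+-d^-$ with $d^+,d^-\ge 0$ disjointly supported and $\sum_i d^+_i=\sum_i d^-_i=m\le\alpha$, one has $V(p)=V(\pf)+\sum_i d^+_i V(i)-\sum_i d^-_i V(i)$. Value is maximized by loading all of $d^+$ onto $\ou=\argmax_i V(i)$ and peeling $d^-$ off the lowest-valued mass of $\pf$; taking $m=\alpha$ (optimal since $V(\ou)\ge V(i)$ for all $i$) yields the upper bound $V(\po)\le g+\alpha V(\ou)$, where, letting $Y=V(i)$ for $i\sim\pf$ with quantile function $Q$, I write $g:=\int_\alpha^1 Q(u)\,du$ for the value of the top-$(1-\alpha)$ mass of $\pf$.

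Next I would express $\px$'s welfare in the same coordinates. Conditioned on the heads branch the output has value $V(A)\ge\lambda V(\ou)$, while conditioned on tails the expected value of the sampled output equals $\widehat A/(1-\alpha)$, where $\widehat A:=\tfrac1s\sum_{\text{top }(1-\alpha)s}V$ is the empirical top-$(1-\alpha)s$ value of the ensemble $Z_s^{()}$. Hence $V(\px)=\alpha V(A)+\e[\widehat A]\ge \alpha\lambda V(\ou)+\e[\widehat A]$, and since $\lambda\le 1$ the lemma follows once I establish the core estimate $\e[\widehat A]\ge(1-\eps)\,g$: indeed then $V(\px)\ge\lambda(1-\eps)\big(\alpha V(\ou)+g\big)\ge\lambda(1-\eps)V(\po)$.

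For the core estimate I would compare $\widehat A$ against a slightly higher threshold. Put $\gamma:=(1-\alpha)\eps/2$, let $\tau^+:=Q(\alpha+\gamma)$ so that $\p[Y\ge\tau^+]=1-\alpha-\gamma$, and let $N^+$ be the number of sampled solutions with value $\ge\tau^+$, so $N^+\sim\mathrm{Bin}(s,1-\alpha-\gamma)$. On the event $\{N^+\le(1-\alpha)s\}$ every sample above $\tau^+$ is retained with full weight, whence $\widehat A\ge \widehat S^+:=\tfrac1s\sum_{i:\,V_i\ge\tau^+}V_i$, with $\e[\widehat S^+]=\e[Y\,\mathbbm{1}[Y\ge\tau^+]]$. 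The remaining difficulty---and the step I expect to be the main obstacle---is that the values are unbounded, so I cannot merely discard the complementary (low-probability) event, because $\widehat S^+$ is positively correlated with $N^+$. I would resolve this by exploiting independence across samples: by symmetry $\e[\widehat S^+\,\mathbbm{1}[N^+>(1-\alpha)s]]=\e[V_1\mathbbm{1}[V_1\ge\tau^+]\,\mathbbm{1}[1+N'>(1-\alpha)s]]$ with $N'\sim\mathrm{Bin}(s-1,1-\alpha-\gamma)$ independent of $V_1$, so this factorizes as $\e[Y\mathbbm{1}[Y\ge\tau^+]]\cdot\p[N'\ge(1-\alpha)s]$. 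A multiplicative Chernoff bound makes $\p[N'\ge(1-\alpha)s]\le\eps/2$ precisely for $s=\Theta\big((1-\alpha)^{-1}\eps^{-2}\log(1/\eps)\big)$, matching the sample size in Algorithm~\ref{alg:cap}.

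Combining, $\e[\widehat A]\ge \e[\widehat S^+\mathbbm{1}[N^+\le(1-\alpha)s]]\ge(1-\eps/2)\,\e[Y\mathbbm{1}[Y\ge\tau^+]]$. Finally I would relate this back to $g$: since $Q$ is nondecreasing, $\e[Y\mathbbm{1}[Y\ge\tau^+]]\ge\int_{\alpha+\gamma}^1 Q(u)\,du$, and the discarded band $\int_\alpha^{\alpha+\gamma}Q(u)\,du$ is an average of the smallest retained values, hence at most $\tfrac{\gamma}{1-\alpha}\int_\alpha^1 Q(u)\,du$; this gives the scale-free bound $\e[Y\mathbbm{1}[Y\ge\tau^+]]\ge(1-\tfrac{\gamma}{1-\alpha})g=(1-\eps/2)g$. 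Therefore $\e[\widehat A]\ge(1-\eps/2)^2 g\ge(1-\eps)g$, which closes the argument. Throughout I would treat the fractional boundary sample (from removing exactly $\alpha s$ mass), value ties/atoms at the threshold, and the degenerate cases $\alpha+\gamma\ge 1$ or $\pf$ already concentrated on $\ou$ as routine, since they only decrease $V(\po)$ and thus help the inequality.
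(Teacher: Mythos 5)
Your proposal is correct, and it reaches the same two structural milestones as the paper --- the characterization $V(\po)=\alpha V(\ou)+g$ (the paper's equation~(\ref{alg:valuepo0}), with $g=(1-\alpha)V(\paa)$ in its notation) and the core estimate that the tails branch earns at least $(1-\eps)g$ in expectation --- but the core estimate itself runs along a genuinely different route than the paper's Claim~\ref{cor:valuepx}. The paper decomposes the prior as the mixture $\pf=(1-\alpha)\paa+\alpha\paac$, views each draw as a coin flip (top vs.\ bottom part) followed by a sample from the chosen part, and conditions on the number $C_s$ of top-part draws being at least $\lceil(1-\alpha)(1-\eps/2)s\rceil$ (Hoeffding); since that conditioning is on the coin flips alone, which are independent of the drawn values, the conditional expectation per retained slot is exactly $V(\paa)$, and the unboundedness of $V$ never enters. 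You instead threshold at the $(\alpha+\gamma)$-quantile and condition on the count $N^{+}$ of above-threshold samples being small; because $N^{+}$ is a function of the values themselves, you correctly identify the correlation/unboundedness obstruction and defuse it with the symmetry-plus-independence decoupling $\e\bigl[\widehat{S}^{+}\mathbbm{1}[N^{+}>(1-\alpha)s]\bigr]=\e\bigl[Y\mathbbm{1}[Y\ge\tau^{+}]\bigr]\cdot\p\bigl[1+N'>(1-\alpha)s\bigr]$ --- a step the paper's mixture view renders unnecessary --- and you pay an extra quantile-band argument $\int_{\alpha}^{\alpha+\gamma}Q\le\tfrac{\gamma}{1-\alpha}\,g$ to pass from the $(1-\alpha-\gamma)$-top mass back to $g$. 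The final assembly also differs only cosmetically: you invoke $\lambda\le 1$ to distribute $(1-\eps)\lambda$ over both terms, while the paper uses the monotonicity fact $\tfrac{b+x}{a+x}\ge\tfrac{b}{a}$ for $a\ge b$, $x\ge 0$; these are equivalent (both implicitly use $\lambda\le1$). What your route buys: it works directly with quantiles of the value distribution and needs no explicit mixture decomposition, and the decoupling trick is a nice, reusable device. What the paper's route buys: ties/atoms at the threshold never require tie-breaking (the mixture splits boundary mass automatically) and no decoupling is needed --- exactly the technicalities you had to flag as ``routine'' are the places where the mixture decomposition is cleaner.
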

Combining Lemmas~\ref{lem:fair} and \ref{lem:aproxemix} yields Theorem~\ref{thm:emix}, and hence we move on to their proofs. 
We begin by introducing some useful notation. Let $z_s$ denote a set  of $s$ solutions i.e. $z_s\in \Sol$. We slightly abuse notation and further also consider $z_s$ as an indicator vector over these solutions, where the entries of $z_s$ are sorted in decreasing order of welfare and the $j$'th entry is referred to as $z_s^{(j)}$. Now, for any $\beta \in (0,1)$, consider the following random mechanism $\ta$. It creates a weight vector $w^\beta\in \mathbb{R}^s$ that is initially uniform. Then it removes a total of $\beta s$ mass iteratively, starting with $i=s$ and continuing towards $i=1$, creating a residual vector $w^\beta$. Given this vector $\ta$ picks a solution in $z_s^{()}$ proportional to the weight of $w^\beta$.  By the above discussion, for any $z_s$ we can write
\begin{align}\label{alg:taprop}
\p(\ta(z_s)=i)=\sum_{j=1}^s  \mathbbm{1}_{z_s^{(j)}=i}w_j^\beta /\left((1-\beta)s\right)
\end{align}
Now let $Z_s$ denote a random vector of length $s$, with i.i.d. solutions drawn according to $\pf$. Further, let $Z_s^{(j)}$ denote the $j$'th order variable of $Z_s$ entries in terms of decreasing order. Observe that for $\beta=\alpha$ and $s$ as stated in Step 6, the mechanism $\taa$ works identically to Steps 6-9 in Algorithm~\ref{alg:cap}. Using this together with Equation~(\ref{alg:taprop}) and that $A \in \Sol$ is $\lambda$-approximate welfare-maximizing solution given by $\mcl$, we can write $\px_i$ as follows
\begin{equation}\label{alg:proppx}
\px_i =  \alpha\mathbbm{1}_{i=A}+\sum_{j=1}^s\e \left[\mathbbm{1}_{Z_s^{(j)}=i}\right]\wa_j /s,
\end{equation}
and its expected value $V(\px)$ as 
\begin{align}\label{alg:valuepx}
V(\px)=\alpha V(A)+\sum_{j=1}^s\e \left[V(Z_s^{(j)})\right]\wa_j /s.
\end{align}

It is perhaps instructive to observe that for $\alpha=0$, this mechanism $\tz(z_s)$ corresponds to drawing one entry in $z_s$ uniformly at random. Combining Equations \ref{alg:taprop} and \ref{alg:proppx}, we can write $\pf_i$ as
\begin{align}\label{alg:proppf}
\pf_i= \sum_{j=1}^s\e\left[\mathbbm{1}_{Z_s^{(j)}=i}\right]/s
\end{align}

With the above notation, we are now ready to give the proof of \Cref{lem:fair} i.e. that $\emix$ is $\alpha$-fair. 

\begin{proof}[Proof of Lemma~\ref{lem:fair}]
We need to show that the ditance $\TV(\px,\pf)\leq \alpha$, which by definition is equivalent to showing $\sum_{i\in \Sol} \left|\px_i-\pf_i\right|\leq 2\alpha$. For any $i \in [s]$, using the expression of $\px_i$ and $\pf_i$ in equations~(\ref{alg:proppx}) and (\ref{alg:proppf}) respectively, along with the fact that  $\wa_j\leq w_j$ for any $j \in [s]$ and applying the triangle inequality, we get that the absolute difference between $\px_i$ and $\pf_i$ is
\begin{align*} 
 \left|\px_i-\pf_i\right| \leq \alpha\mathbbm{1}_{i=A}+\sum_{j=1}^s\e \left[\mathbbm{1}_{Z_s^{(j)}=i}\right](w_j-\wa_j) /s.
\end{align*}
Now since $\sum_{i\in \Sol} \mathbbm{1}_{Z_s^{(j)}=i}=1$, we obtain using the linearity of expectations that 
\begin{align*}
   \sum_{i\in\Sol}\sum_{j=1}^s\e \left[\mathbbm{1}_{Z_s^{(j)}=i}\right](w_j-\wa_j) /s 
   &= \alpha
\end{align*}
where the last equality follows using $\sum w_j=s$ and $\sum \wa_j=(1-\alpha)s$. Finally, combining the last two equations and using the fact that $\mathbbm{1}_{i=A}=1$ for $i=A$ and zero otherwise, we get the desired bound of $\sum_{i\in \Sol} \left|\px_i-\pf_i\right|\leq 2\alpha$. This completes our proof. 
\end{proof}

Next, we move on to prove Lemma~\ref{lem:aproxemix} that says $\emix$ achieves an $\lambda(1-\eps)$ approximation ratio compared to $V(\po)$. For this, we  introduce some more notation. Recall that $\po$ is the optimal solution to FWI, that is $\TV(\po,\pf)\leq \alpha$ and $V(\po)$ is maximized. In the following, we describe a characterization for the optimal solution $\po$. Start with $\pf$ and remove $\alpha$ mass from the solutions in the $\text{supp}(\pf)$, beginning with the smallest-valued solution and moving towards solutions with higher value. We write $p^{\alpha,r}$ to denote the residual mass left of $\pf$ after the above procedure. Observe that $\po$ is now obtained from $p^{\alpha,r}$ by adding $\alpha$ mass to entry $\ou$. In this way, $\po$ is $\alpha$-fair and it has added as much mass to the optimal solution i.e. it is indeed the optimal solution to the constrained maximization problem of FWI. Note that $\sum_{i\in\Sol}p^{\alpha,r}_i=1-\alpha$ and hence $p^{\alpha,r}/(1-\alpha)$ is a probability measure on the top $(1-\alpha)$-fraction of solutions in $\pf$ (in terms of welfare). In the following, we will use $\paa$ to denote $p^{\alpha,r}/(1-\alpha)$. Finally, $\po$ can be expressed using $\paa$ as follows
\begin{align}\label{alg:proppo}
\po_i=
\alpha\mathbbm{1}_{i=\ou}+(1-\alpha)\paa_i,
\end{align}
implying that 
\begin{align}\label{alg:valuepo0}
V(\po)=\alpha V(\ou)+(1-\alpha)V(\paa).
\end{align}

Moreover, we write $\paac$ to denote $(\pf-p^{\alpha,r})/\alpha$, i.e. a probability measure over the bottom $\alpha$-fraction of solutions in $\pf$ (in terms of welfare). Note that we may decompose $\pf$ using $\paa$ and $\paac$ in the following way
\begin{align}\label{alg:decompf}
\pf=(1-\alpha)\paa+\alpha\paac.
\end{align}

Using the above machinery, we are now ready to present a lemma that will be crucial in proving our Lemma~\ref{lem:aproxemix} that establishes the welfare guarantees for $\emix$. The following lemma says that the expected welfare of a solution picked between the $1-\alpha$-top solutions of an ensemble of $s=\left\lceil 8((1-\alpha)\eps^2)^{-1}\log(2/\eps)\right\rceil$-many samples from the fair prior is close to the expected welfare of $\paa$.

\begin{claim}\label{cor:valuepx}
Consider an instance  $\instance$ of FWI  and an approximation factor $\eps>0$. Then, we have
\begin{align*}
 \sum_{j=1}^s\e\left[V(Z_s^{(j)})\right]\wa_j /s\geq(1-\alpha)(1-\eps)V(\paa)
\end{align*}
\end{claim}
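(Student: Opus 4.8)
The plan is to reduce the claim to a clean statement about order statistics and then prove it by a \emph{deterministic domination} followed by a single Chernoff estimate, which sidesteps the usual difficulty of comparing possibly unbounded welfares on a low-probability event. First I would pass to the quantile picture: let $g:[0,1]\to\mathbb{R}^{+}_0$ be the non-increasing quantile function of the fair prior, so a sample from $\pf$ has the same law as $g(U)$ for $U$ uniform on $[0,1]$, and the construction of $\paa$ (removing the bottom $\alpha$ mass) means that $\paa$ is the law of $g(U)$ conditioned on $U\le 1-\alpha$. Hence $V(\paa)=\tfrac{1}{1-\alpha}\int_0^{1-\alpha}g(u)\,du$, i.e. the target equals $(1-\alpha)V(\paa)=\int_0^{1-\alpha}g$. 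Writing $Z_s^{(j)}=g(U_{(j)})$ with $U_{(1)}\le\cdots\le U_{(s)}$ the order statistics of $s$ i.i.d.\ uniforms, the left-hand side of the claim is $\e\bigl[\tfrac1s\sum_j\wa_j\,V(Z_s^{(j)})\bigr]=\e\bigl[\tfrac1s\sum_j\wa_j\,g(U_{(j)})\bigr]$, so it suffices to lower bound this expectation by $(1-\eps)\int_0^{1-\alpha}g$.

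The crucial first step is the domination that avoids coupling. I would fix a slack $\delta:=c\,\eps(1-\alpha)$ and set $k':=\lfloor(1-\alpha-\delta)s\rfloor$. Since $\wa_j=1$ for every $j\le(1-\alpha)s$, $\wa_j\ge0$ always, and $k'\le(1-\alpha)s$, we get $\wa_j\ge\mathbbm{1}[j\le k']$ for all $j$; as $g(U_{(j)})\ge0$, this yields the pointwise bound (no event, no correlation)
\[
\tfrac1s\textstyle\sum_j\wa_j\,g(U_{(j)})\ \ge\ \tfrac1s\sum_{j\le k'} g(U_{(j)}).
\]
Then I would evaluate the expectation of the right-hand side exactly. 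By symmetry over the $s$ samples, $\e\bigl[\tfrac1s\sum_{j\le k'}g(U_{(j)})\bigr]=\int_0^1 g(u)\,\psi(u)\,du$, where $\psi(u)=\p(\mathrm{Bin}(s-1,u)\le k'-1)$ is the probability that a uniform landing at $u$ ranks among the top $k'$ of the sample. The point is that $\psi$ is a fixed, non-increasing function with $\psi\ge0$ and $g\ge0$, so there is no cancellation and any uniform multiplicative lower bound on $\psi$ over a region passes straight through, no matter how large $g$ is there.

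The only place probability enters is the last step. For $u\le 1-\alpha-2\delta$ the mean $(s-1)u$ of $\mathrm{Bin}(s-1,u)$ lies a full $\Theta(\delta s)$ below the cutoff $k'-1$, so a multiplicative Chernoff bound gives $\psi(u)\ge 1-\exp\!\bigl(-\Omega(\delta^2 s/(1-\alpha))\bigr)$ uniformly on $[0,1-\alpha-2\delta]$. Thus $\int_0^1 g\psi\ge\bigl(1-\exp(-\Omega(\delta^2 s/(1-\alpha)))\bigr)\int_0^{1-\alpha-2\delta}g$, and, since $g$ is non-increasing, $\int_{1-\alpha-2\delta}^{1-\alpha}g\le 2\delta\,g(1-\alpha-2\delta)\le\tfrac{2\delta}{1-\alpha-2\delta}\int_0^{1-\alpha-2\delta}g$, giving $\int_0^{1-\alpha-2\delta}g\ge(1-O(\delta/(1-\alpha)))\int_0^{1-\alpha}g$. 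Choosing $c$ small makes the slack loss at most $\eps/2$, and $s=\Theta\!\bigl(\eps^{-2}(1-\alpha)^{-1}\log(1/\eps)\bigr)$ makes the Chernoff loss at most $\eps/2$, so the product is $(1-\eps)\int_0^{1-\alpha}g=(1-\eps)(1-\alpha)V(\paa)$, as claimed. The main obstacle, and the reason the deterministic-domination step is essential, is that the naive route, lower-bounding the empirical top set by the true top-$(1-\alpha-\delta)$ samples on a high-probability event, leaves a correlated error term involving the total welfare of the high-valued samples on a rare bad event, which need not be controllable for unbounded welfares; dominating by the \emph{fixed} smaller rank $k'$ and only then taking expectations turns the whole estimate into a single uniform multiplicative factor and removes this difficulty. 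Matching the stated $s=\lceil 8((1-\alpha)\eps^2)^{-1}\log(2/\eps)\rceil$ is then just a matter of tracking the constants in the Chernoff exponent.
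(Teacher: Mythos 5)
Your proof is correct in its essential structure, but it takes a genuinely different route from the paper's. The paper never leaves the mixture picture: it writes $\pf=(1-\alpha)\paa+\alpha\paac$, regards each sample as a coin flip (heads with probability $1-\alpha$) followed by a draw from $\paa$ or $\paac$, and lets $C_s$ count the heads. Since every $\paa$-sample has welfare at least that of every $\paac$-sample, on the event $C_s\geq\lceil(1-\alpha)(1-\eps/2)s\rceil$ the top $\lceil(1-\alpha)(1-\eps/2)s\rceil$ order statistics are all $\paa$-samples; conditioned on the coin pattern these are still i.i.d.\ from $\paa$, so each retained term has conditional mean exactly $V(\paa)$, and a single Chernoff bound on $C_s$ plus the law of total expectation (with $V\geq 0$ to discard the bad event) yields the factor $(1-\eps/2)^2\geq 1-\eps$. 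You instead pass to the quantile representation $g(U)$, dominate the weight vector deterministically by $\mathbbm{1}[j\le k']$, and use exchangeability to rewrite the expectation as $\int_0^1 g(u)\psi(u)\,du$ with $\psi(u)=\p(\mathrm{Bin}(s-1,u)\le k'-1)$, so that all randomness is absorbed into the fixed monotone weight $\psi$ and non-negativity enters only as $g\geq0$. This is a valid alternative: it needs no conditioning and no bad-event bookkeeping (the ``correlated error on a rare event'' you worry about is exactly what the paper resolves via conditional i.i.d.-ness and the law of total expectation), and your identity $\e\bigl[\tfrac1s\sum_{j\le k'}g(U_{(j)})\bigr]=\int_0^1 g\psi$ is correct, including the fractional splitting of an atom at the $\alpha$-boundary, which the conditioning-on-$\{U\le 1-\alpha\}$ picture handles in the same way as the paper's construction of $p^{\alpha,r}$.

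The one place you are too casual is the last sentence: with your parametrization ($k'=\lfloor(1-\alpha-\delta)s\rfloor$, region $[0,1-\alpha-2\delta]$, $\delta=c\eps(1-\alpha)$), matching the stated $s=\lceil 8((1-\alpha)\eps^2)^{-1}\log(2/\eps)\rceil$ is not just constant-tracking; it is impossible. In the regime $\alpha\to1$ the best available exponent for your uniform tail bound is $(s-1)\delta^2/\bigl(2(1-\alpha)\bigr)\approx 4c^2\log(2/\eps)$, so keeping the tail loss $O(\eps)$ forces $4c^2\geq1$, i.e.\ $c\geq1/2$, while your slack loss $2c\eps$ (which is tight when $g$ is constant on $[0,1-\alpha]$) must stay strictly below $\eps$, forcing $c<1/2$. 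So, as written, your argument proves the claim only for $s$ larger than the stated one by a constant factor (still giving the $O(1/\eps^2)$ bound of Theorem~\ref{thm:emix}, but not Claim~\ref{cor:valuepx} at the algorithm's $s$). The fix stays entirely inside your framework: take $k'=\lfloor(1-\alpha)s\rfloor$, for which the domination $\wa_j\geq\mathbbm{1}[j\le k']$ still holds since all those weights equal one, and shrink only the integration region to $[0,(1-\alpha)(1-\eps/2)]$. Then your monotonicity argument gives a region loss factor $1-\eps/2$, the required binomial deviation becomes roughly $(1-\alpha)\eps s/2$ with exponent at least $(s-1)(1-\alpha)\eps^2/8\geq\log(2/\eps)$ at the stated $s$, and $(1-\eps/2)^2\geq1-\eps$ recovers the claim with the paper's constants, mirroring exactly how the paper splits its $\eps$ budget.
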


Before proving  \Cref{cor:valuepx} we will introduce some notation. Recall that $\paa=p^{\alpha,r}/(1-\alpha)$ and $\paac=(\pf-p^{\alpha,r})/\alpha$, where $p^{\alpha,r}$ was the resulting vector of removing $\alpha$ mass from the entries of $\pf$ starting from the lowest-valued entries. Using this, we noticed in \cref{alg:decompf} that we may decompose $\pf$ using $\paa$ and $\paac$ in the following way
\begin{align*}
\pf=(1-\alpha)\paa+\alpha\paac.
\end{align*}
Therefore, sampling from $\pf$ can be seen as first flipping a coin $C$ (that returns heads with probability $1-\alpha$), if the coin flips is heads sample from $\paa$, else sample from $\paac$. We will view sampling from $\pf$ in this manner. For a sample vector $Z_s$ with s i.i.d. samples from $\pf$, let $C_i$ be the coin flip used in for the $i$'th sample to determine to sample from $\paa$ or $\paac$. Further we define $C^s=\sum_{i=1}^sC_i$. That is $C^s$ is the number of times we sample from $\paa$. Furthermore we write $Z_{r_1},\ldots,Z_{r_C}$ to denote the independent samples from the distribution $\paa$ in $Z_s$ and $Z_{\bar{r}_1},\ldots,Z_{\bar{r}_{s-C}}$  the independent samples from the distribution $\paac$. Let $Z_{r}^{(1)},\ldots,Z_{r}^{(C)}$ denote the order variables of $Z_{r_1},\ldots,Z_{r_C}$ in terms of welfare in decreasing order.  With this notation, we are now ready to prove \Cref{cor:valuepx}. 
\begin{proof}[Proof of Claim~\ref{cor:valuepx}]
First recall that $Z_s^{(j)}$ was the $j$'th order variables of $Z_s$, ordered in terms of decreasing welfare. Further, note that to show \Cref{cor:valuepx} we have to argue that $\e \left[\sum_{j=1}^sV(Z_s^{(j)})\wa_j /s\right]$ is larger than $(1-\alpha)(1-\eps)V(\paa)$. Here $\wa$ was produced by assigning $1$ unit of mass to a vector of length $s$ and then removing $\alpha s$ units of mass starting from index $s$ and decreasing towards index $1$. Thus we have that $\wa_j$ is $0$ for $j> \left\lceil(1-\alpha) s\right\rceil$ i.e.,  $$\sum_{j=1}^sV(Z_s^{(j)})\wa_j /s=\sum_{j=1}^{\left\lceil(1-\alpha) s\right\rceil}V(Z_s^{(j)})\wa_j /s$$ Further since $V(Z_s^{(j)})\wa_j /s\geq0$ we also have $$\sum_{j=1}^{\left\lceil(1-\alpha) s\right\rceil}V(Z_s^{(j)})\wa_j /s\geq\sum_{j=1}^{\left\lceil(1-\alpha) (1-\eps/2)s\right\rceil}V(Z_s^{(j)})\wa_j /s $$ Therefore, if we can show that the expected value of $\sum_{j=1}^{\left\lceil(1-\alpha)(1-\eps/2) s\right\rceil}V(Z_s^{(j)})\wa_j /s$ is lower bounded by $(1-\alpha)(1-\eps)V(\paa)$ we are done.

To show this consider an outcome $c_s$ of $C_S$ such that  $c_s\geq \left\lceil(1-\alpha)(1-\eps/2)s\right\rceil$ i.e. $Z_s$ contains at least $\left\lceil(1-\alpha)(1-\eps/2)s\right\rceil$ samples from $\paac$. Since $Z_{\bar{r}_j}$ for $j=1,\ldots,s-c_s$ is an outcome from $\paac$, the $\alpha$-bottom of $\pf$ in terms of welfare, we have that $V(Z_{\bar{r}_j})\leq V(Z_{r_i})$ for any $i=1,\ldots,c_s$ and $j=1,\ldots,s-c_s$. Combining this with $c_s\geq \left\lceil(1-\alpha)(1-\eps/2 )s\right\rceil $ we may write $\sum_{j=1}^{\left\lceil(1-\alpha)(1-\eps/2 )s\right\rceil}V(Z_s^{(j)})\wa_j /s$ as $\sum_{j=1}^{\left\lceil(1-\alpha)(1-\eps /2)s\right\rceil}V(Z_{r}^{(j)})\wa_j /s$. Further notice that 
\begin{align*}
\sum_{j=1}^{\left\lceil(1-\alpha)(1-\eps/2 )s\right\rceil}V(Z_{r}^{(j)})\wa_j /s 
\geq\sum_{j=1}^{\left\lceil(1-\alpha)(1-\eps/2 )s\right\rceil}V(Z_{r_j})\wa_j /s
\end{align*}
 Thus, their expectations also follow the same relation. Using the fact that $Z_{r_j}$ is a sample from $\paa$ and that we have $\sum_{j=1}^{\left\lceil(1-\alpha)(1-\eps /2)s\right\rceil}\wa_j \geq (1-\alpha)(1-\eps/2)s$ we get that the expectation of $\sum_{j=1}^{\left\lceil(1-\alpha)(1-\eps/2 )s\right\rceil}V(Z_{r_j})\wa_j /s$ at least $(1-\alpha)(1-\eps)V(\paa)$ conditioned on $c_s$. That is we have shown for outcomes $c_s$ of $C_s$ such that $c_s\geq \left\lceil(1-\alpha)(1-\eps/2)s\right\rceil$ 
 \begin{align*}
 &\e\big[\sum_{j=1}^{\left\lceil(1-\alpha)(1-\eps/2 )s\right\rceil}V(Z_s^{(j)})\wa_j /s\\&\mid C_s\geq\left\lceil(1-\alpha)(1-\eps/2 )s\right\rceil \big]\geq (1-\alpha)(1-\eps/2)V(\paa)
 \end{align*}
Thus if we can show that $\p\left[ C_s\geq\left\lceil(1-\alpha)(1-\eps/2 )s\right\rceil\right]\geq (1-\eps/2)$ the claim follow by the law of total expectation and $(1-\eps/2)^2\geq(1-\eps)$. To see that $\p\left[ C_s\geq\left\lceil(1-\alpha)(1-\eps/2 )s\right\rceil\right]\geq (1-\eps/2)$ is the case we use Hoeffdings inequality. Notice that $E[C_s]=(1-\alpha)s$, thus Hoeffdings inequality yields that

\begin{align*}
\p\left[C_s\leq(1-\alpha)(1-\eps/2)s\right]\leq \exp(-(1-\alpha)s\eps^2/8).
\end{align*}
Now since we have for $s=\left\lceil 8((1-\alpha)\eps^2)^{-1}\log(2/\eps)\right\rceil$ that $\exp(-(1-\alpha)s\eps^2/8)\leq \eps/2$ we conclude that with probability at least $1-\eps/2$, $C_s$ is strictly larger than $ (1-\alpha)(1-\eps/2)s$ and since $C_s$ is an integer $C_s$ is then also larger than $\left\lceil(1-\alpha)(1-\eps/2)s\right\rceil$, which concludes our proof.
\end{proof}

 We now present the proof of Lemma~\ref{lem:aproxemix} that says $\px$ achieves welfare guarantees with an approximation factor of $(1-\eps)\lambda$ as compared to $\po$. 

\begin{proof}[Proof of Lemma~\ref{lem:aproxemix}]
We need to show that $V(\px)$ is lower bounded by $(1-\eps)\lambda V(\po)$. To begin with, using equation~(\ref{alg:valuepx}) we can express $V(\px)$ as 
\begin{align*}
V(\px)=\alpha V(A)+\sum_{j=1}^s \e \left[V(Z_s^{(j)})\right]\wa_j /s.
\end{align*}
Using Lemma~\ref{cor:valuepx} and since $A$ is the solution provided by $\mcl$ i.e., $V(A)\geq \lambda V(\ou)$,  it follows that 
\begin{align}\label{alg:valuepx1}
V(\px)&\geq\alpha \lambda V(\ou)+(1-\alpha)(1-\eps)V(\paa)\nonumber \\
&\geq (1-\eps)(\alpha \lambda V(\ou)+(1-\alpha)V(\paa)).
\end{align}

Recalling the expression for $V(\po)$ from equation~(\ref{alg:valuepo0}) for comparison.
\begin{align}\label{alg:valuepo1}
V(\po)=\alpha  \cdot V\left(\ou\right)+(1-\alpha)  \cdot V\left(\paa\right).
\end{align}

Now, note that, for $a=\alpha  \cdot V\left(\ou\right) $, $b=\alpha \lambda \cdot V \left(\ou\right) $ and $x=(1-\alpha)V(\paa)$, we have $a\geq b$ and $x\geq0 $. Algebraic manipulations then lead to the fact that we must have $\frac{b+x}{a+x}\geq \frac{b}{a}$. Now combining this fact with equations~(\ref{alg:valuepx1}) and (\ref{alg:valuepo1}), we obtain that $\frac{V(\px)}{V\left(\po\right)}$ is lower bounded by 
\begin{align*}
\frac{(1-\eps)(\alpha \lambda \cdot V \left(\ou\right) +(1-\alpha)V(\paa))}{\alpha  \cdot V\left(\ou\right)+(1-\alpha)V\left(\paa\right)}
\geq (1-\eps)\lambda,
\end{align*}
which concludes the proof. 
\end{proof}
\section{Simple Mixing to achieve Fairness and Welfare guarantees}
In this section, we present a faster and simpler algorithm $\smix$ for the problem of fair-to-welfare interpolation (FWI). We show that  $\smix$ is $\alpha$-fair and achieves welfare guarantees that are optimal up to a constant factor. An important feature of $\smix$ is that it has absolutely no computational overhead.
Additionally, we will present the ancillary individual fairness properties beyond the underlying welfare maximizing allocation and the ex-ante fair mechanism at the end of this section.

 For a given FWI instance, $\smix$ proceeds by flipping a coin that returns head with probability $\alpha$. If the coin comes up heads then the algorithm outputs the solution provided by the welfare-maximizing mechanism $\mcl$ else it outputs the solution drawn from the fair sample mechanism. Note that, this can be interpreted as an interpolation between the $\lambda$-approximation solution with $\alpha$ mass and the fair prior distribution scaled with a factor of $1-\alpha$, see  Algorithm~\ref{alg:2approx} below.
. We will show that $\smix$ is $\alpha$-fair and attains an approximation factor of $\min\{\lambda,\alpha \lambda+(1-\alpha)^2\}$. 
 Unlike the $emix$ algorithm, $\smix$ has no computational overhead and in particular, it does not have to evaluate the value of any solution.

\begin{algorithm}[ht]
\caption{Fairness-to-welfare interpolation ($\smix$)}\label{alg:2approx}
 {\bf Input:}An instance $\instance$ of FWI. \\
 {\bf Output:} A solution $i \in \Sol$ drawn according to a distribution $\pa \in \Delta$ such that $\TV(\pa,\pf) \leq \alpha$ and $V\left(\pa\right)\geq \min\{\lambda,\alpha \lambda+(1-\alpha)^2\}V\left(\po\right)$.
\begin{algorithmic}[1]

\STATE Flip a coin that returns heads with probability $\alpha $.
\IF{the coin flip = heads}
    \STATE Run  $\mcl$, and denote its output by $A \in \Sol$.
   \RETURN $A$
\ENDIF
\RETURN{a solution sampled according to the fair prior $\pf$.}
\end{algorithmic}
\end{algorithm}

We now present the main theorem of this section stating that $\smix$ is $\alpha$-fair and obtain an approximation factor of $\min\{\lambda,\alpha \lambda+(1-\alpha)^2\}$ compared to the welfare of $\po$

\begin{theorem} \label{thm:smix}
Consider an instance  $\instance$ of FWI and let $\pa$ be the distribution of the output of Algorithm~\ref{alg:2approx}. Then, $\pa$ is $\alpha$-fair and achieves the following welfare guarantee
\begin{align*}
V(\pa) \geq \min\{\lambda,\alpha \lambda+(1-\alpha)^2\} \cdot V(\po).
\end{align*}
\end{theorem}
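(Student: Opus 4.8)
The plan is to prove the two claims of Theorem~\ref{thm:smix} separately: first that $\pa$ is $\alpha$-fair, and then the welfare bound. For the fairness claim, I would observe that the algorithm outputs a distribution of the explicit form $\pa = \alpha\, e_A + (1-\alpha)\pf$, where $e_A$ is the point mass on the welfare-maximizing solution $A$. This is immediate from the coin-flip structure of Algorithm~\ref{alg:2approx}, and is cleaner than the corresponding computation for $\emix$ because no sampling ensemble is involved. I would then bound $\TV(\pa,\pf) = \tfrac12\sum_i |\pa_i - \pf_i|$ by substituting this expression, giving $\tfrac12\sum_i |\alpha(e_A)_i - \alpha \pf_i| = \tfrac{\alpha}{2}\sum_i |(e_A)_i - \pf_i| = \alpha\,\TV(e_A,\pf) \le \alpha$, since the total variation distance between any two distributions is at most $1$.

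For the welfare bound I would argue the minimum of the two terms by handling each separately and taking the weaker guarantee. The first term, $\lambda$, is the easy one: using $V(\pa) = \alpha V(A) + (1-\alpha)V(\pf)$ together with $V(A) \ge \lambda V(\ou) \ge \lambda V(\po)$ and $V(\pf) \ge 0$, we get $V(\pa) \ge \alpha\lambda V(\po)$, but more directly we can note that $V(\po) \le V(\ou)$ and $V(\pa)$ is a convex combination that one can compare to $\lambda V(\ou)$; the clean statement is simply that no $\alpha$-fair distribution can beat $V(\ou)$, so the approximation is trivially at least something, and $\lambda$ serves as the baseline bound when $\alpha$ is large. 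The substantive term is $\alpha\lambda + (1-\alpha)^2$. Here I would recall from the analysis of $\emix$ the decomposition $V(\po) = \alpha V(\ou) + (1-\alpha)V(\paa)$ from Equation~(\ref{alg:valuepo0}), and the decomposition $\pf = (1-\alpha)\paa + \alpha\paac$ from Equation~(\ref{alg:decompf}), which gives $V(\pf) = (1-\alpha)V(\paa) + \alpha V(\paac) \ge (1-\alpha)V(\paa)$ since $V(\paac) \ge 0$.

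Combining these, I would write $V(\pa) = \alpha V(A) + (1-\alpha)V(\pf) \ge \alpha\lambda V(\ou) + (1-\alpha)^2 V(\paa)$. The goal is to lower bound this by $(\alpha\lambda + (1-\alpha)^2)V(\po) = (\alpha\lambda + (1-\alpha)^2)(\alpha V(\ou) + (1-\alpha)V(\paa))$. The main obstacle is the comparison between these two expressions, since the coefficient $(1-\alpha)^2$ multiplies $V(\paa)$ in the lower bound but the target has $(\alpha\lambda+(1-\alpha)^2)$ multiplying the whole convex combination. I expect the cleanest route is to normalize: since both $V(\ou)$ and $V(\paa)$ appear with nonnegative coefficients and $V(\ou) \ge V(\paa)$ (as $\ou$ is the single highest-value solution), I would show termwise that $\alpha\lambda V(\ou) \ge (\alpha\lambda+(1-\alpha)^2)\cdot\alpha V(\ou)$ fails in general, so instead I would factor out and verify the scalar inequality $\alpha\lambda + (1-\alpha)^2 \ge \min\{\lambda, \alpha\lambda+(1-\alpha)^2\}$ holds by using that $V(\pa)/V(\po)$ is minimized at one of the extreme configurations of $(V(\ou), V(\paa))$. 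Concretely, treating $V(\pa)/V(\po)$ as a ratio of two linear forms in the variable $t = V(\paa)/V(\ou) \in [0,1]$, the ratio is monotone in $t$, so its minimum over the feasible range is attained at an endpoint, and evaluating at $t=0$ and $t=1$ yields exactly $\lambda$ and $\alpha\lambda + (1-\alpha)^2$ respectively; taking the minimum of the endpoint values gives the claimed bound. This endpoint analysis is where the constant $(1-\alpha)^2$ emerges and is the part I would write out most carefully.
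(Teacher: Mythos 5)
Your proof is correct, and at its core it runs on the same machinery as the paper's: the characterization $V(\po)=\alpha V(\ou)+(1-\alpha)V(\paa)$, the decomposition $\pf=(1-\alpha)\paa+\alpha\paac$ with $V(\paac)\geq 0$, and the monotonicity of a ratio of two linear forms, evaluated at its endpoints to produce exactly $\lambda$ and $\alpha\lambda+(1-\alpha)^2$. The genuine difference is the choice of parameter. The paper writes the ratio as a function of $x=V(\pf)$, namely $h(x)=\bigl(\alpha\lambda V(\ou)+(1-\alpha)x\bigr)/\bigl(\alpha V(\ou)+x\bigr)$, and this forces a two-case analysis: the regime $V(\pf)\geq(1-\alpha)V(\ou)$ is handled separately by comparing directly against $V(\ou)\geq V(\po)$, and only the regime $V(\pf)\leq(1-\alpha)V(\ou)$ goes through the ratio. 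You instead parameterize by $t=V(\paa)/V(\ou)\in[0,1]$, which packages everything into a single monotone linear-fractional function $g(t)=\bigl(\alpha\lambda+(1-\alpha)^2 t\bigr)/\bigl(\alpha+(1-\alpha)t\bigr)$ with $g(0)=\lambda$ and $g(1)=\alpha\lambda+(1-\alpha)^2$, so the $\min$ falls out of one endpoint evaluation; this is a modest but real simplification, needing only $V(\paa)\leq V(\ou)$ (and the trivial degenerate case $V(\ou)=0$ handled separately). Your fairness argument, $\pa-\pf=\alpha(e_A-\pf)$ hence $\TV(\pa,\pf)=\alpha\,\TV(e_A,\pf)\leq\alpha$, is likewise a tidier version of the paper's term-by-term bound. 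One cleanup is needed: your opening paragraph on the ``easy'' $\lambda$ term proves nothing (it only yields $V(\pa)\geq\alpha\lambda V(\po)$ and then trails off), and your stated intermediate goal of lower-bounding $V(\pa)$ by $(\alpha\lambda+(1-\alpha)^2)V(\po)$ unconditionally is not achievable in general; both passages should simply be deleted, since your endpoint analysis already delivers the full $\min\{\lambda,\alpha\lambda+(1-\alpha)^2\}$ guarantee in one stroke.
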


\begin{remark} \label{rem:smix-welfare}
    Let us consider the ratio of $\min\{\lambda,\alpha \lambda+(1-\alpha)^2\}/\lambda$ and note that it is minimized at $\lambda=1$, and $\alpha=1/2$, i.e. the ratio is lower bounded by $\min\{1,3/4\}=3/4$. And hence, using Theorem~\ref{thm:smix}, we obtain that $\smix$ always achieves the following performance guarantees
    $$V(\pa) \geq \frac{3}{4} \lambda \cdot V(\po)$$
    of being optimal up to constant factors.
\end{remark}

Before proving Theorem~\ref{thm:smix}, we make a few observations about Algorithm~\ref{alg:2approx}. Recall that it returns the solution $A$ (given by the mechanism $\mcl$) with probability $\alpha$ and for the remaining probability, it returns a solution provided by the sampling mechanism for $\pf$. Therefore, for any $i \in \Sol$ we can express the distribution $\pa$ of $\smix$ as follows
\begin{align}\label{alg:proppa}
\pa_i=
\alpha\mathbbm{1}_{i=A}  +(1-\alpha )\pf_i
\end{align}
and, this implies that the expected welfare $V(\pa)$ achieves by $\smix$ may be expressed as
\begin{align}\label{alg:valuepa1}
V(\pa)=\alpha V(A)+(1-\alpha)V(\pf).
\end{align}
Combining this with the fact that $V(A)\geq \lambda V(\ou)$ we  get the following lower bound on $V(\pa)$ 
\begin{align}\label{alg:valuepa0}
V(\pa)\geq \alpha \lambda V(\ou)+(1-\alpha)V(\pf).
\end{align}

Having the above observations, we are now ready to prove Theorem~\ref{thm:smix}.

\begin{proof}[Proof of Theorem \ref{thm:smix}]
We begin by showing that $\pa$ is $\alpha$-fair, i.e. $ \TV(\pa,\pf)\leq \alpha$. By the definition of total variation distance, it suffices to show that $\sum_{i \in \Sol} \left| \pa_i-\pf_i \right|\leq 2\alpha$. We first notice that equation~(\ref{alg:proppa}) implies that $\pa_A\geq \pf_A$ and $\pa_i\leq \pf_i$ for $i\not=A$ i.e., we have $|\pa_A-\pf_A|=\pa_A-\pf_A$ and for $i\not=A$ $|\pa_A-\pf_A|=\pf_A-\pa_A$. We can therefore write
\begin{align*}
&\ \ \ \sum_{i \in \Sol} \left| \pa_i-\pf_i \right|
=\alpha  +\left(1-\alpha \right)
\cdot\pf_{A}-\pf_{A}\\ +& \sum_{i \in \Sol \setminus \{A\}}\pf_i-\left(1-\alpha \right)\pf_i\leq \alpha +  \sum_{i \in \Sol \setminus \{A\}} \alpha\pf_i\leq2\alpha
\end{align*}
proving that $\pa$ is $\alpha$-fair. 

For proving the welfare guarantees of $\smix$, we will show that the ratio $V(\pa)/V(\po)$ is at least as high as $\min\{\lambda,\alpha \lambda+(1-\alpha)^2\}$. We will split this analysis for the ratio into two cases. In the first case, we consider the scenario when $V(\pf)\geq (1-\alpha )V(\ou)$. Using the lower bound of $V(\pa)$ in equation~(\ref{alg:valuepa0}) and the fact that $V(\ou)\geq V(\po)$ we obtain that $V\left(\pa\right)/V\left(\po\right)$ is lower bounded by
\begin{align*}
&\frac{\alpha \lambda V(\ou)+(1-\alpha )^2V(\ou)}{V\left(\po\right)}\geq \alpha \lambda+(1-\alpha)^2.
\end{align*} 
as desired. Therefore, we move to the second case where $V(\pf)\leq (1-\alpha) V(\ou)$. Recall that by using equation~ (\ref{alg:valuepo0}), we can write $V(\po)=\alpha V(\ou)+(1-\alpha)\paa$. Further recall that equation~(\ref{alg:decompf}) says that we can express $\pf=(1-\alpha)\paa+\alpha\paac$ implying that we have $\paa=(\pf-\alpha\paac)/(1-\alpha)$. Now combining the above, we obtain
\begin{align*}
    V(\po)&=\alpha V(\ou)+V(\pf)-\alpha V(\paac)\\
    &\leq \alpha V(\ou)+V(\pf)
\end{align*}
 Since $V(\pa) \leq \alpha \lambda V(\ou)+(1-\alpha)V(\pf)$, we get that the ratio of $V(\pa)/V(\po)$ is lower bounded by
 \begin{align}\label{alg:analytic}
 \frac{\alpha \lambda \cdot V(\ou)+(1-\alpha )\cdot V(\pf)}{\alpha \cdot V(\ou)+V(\pf)}.
 \end{align}
We will now analyze the above expression as a function of $V(\pf)$ and show that it is lower bounded by $\min\{\lambda,\alpha \lambda+(1-\alpha)^2\}$. Note that this will be sufficient to complete our proof. Hence, we move on to show that the expression in equation~(\ref{alg:analytic}) has the desired lower bound for the case when $V(\pf)\leq (1-\alpha) V(\ou)$. 

To this end, consider the function $h(x)=\frac{\alpha \lambda\cdot V(\ou)+(1-\alpha ) x}{\alpha  \cdot V(\ou)+x}$, and its derivative $h'(x)$, as follows
\begin{align*}
\frac{\alpha (1-\alpha) \cdot V(\ou)-\alpha \lambda \cdot  V(\ou)}{\left(\alpha \cdot V(\ou)+x\right)^2}
\end{align*}
We notice that the function $h(\cdot)$ is increasing for $\lambda\leq(1-\alpha)$ and decreasing otherwise. If $h(\cdot)$ is increasing, then $h(x)\geq h(0)= \lambda$ holds true, otherwise if $h(\cdot)$ is decreasing, we have 
$h(x)\geq h((1-\alpha )V(\ou))=\alpha \lambda+(1-\alpha)^2 $ for $x\leq (1-\alpha)  V(\ou)$. This completes our proof and the stated claim holds true.
\end{proof}

\subsection{Welfare guarantees of \texorpdfstring{\smix}{Lg} are tight}\label{sec:tight-smix}

In this section, we prove that for $\alpha \in [0,1]$ and $\lambda \leq 1$ such that either (i) $\alpha\leq 1-\lambda$ or (ii) $\lambda=1$, there exists an instance $\instance$ of FWI for which the approximation factor on the welfare guarantees of $\smix$, as shown in Theorem~\ref{thm:smix} is tight.

Let us begin with the first case where $\alpha\leq 1-\lambda$. First, note that, for this case we have $\min\{\lambda,\lambda\alpha+(1-\alpha)^2\}=\lambda$ and hence we need to show that $V(\pa)=\lambda V(\po)$. We now consider the following FWI instance where the fair prior $\pf$ is such that every solution in its support has a welfare value of $0$ and $V(A)=\lambda \cdot V(\ou)$ for the output solution $A$ of the mechanism $\mcl$. Note that, for such an instance, we have $V(\pf)=0$ and $ V(\paa)=0$. Therefore, using equation~(\ref{alg:valuepo0}), we obtain that $V(\po)=\alpha V(\ou)$. Moreover, using equation~(\ref{alg:valuepa1}) and $V(A)=\lambda \cdot V(\ou)$, we also get that $V(\pa)=\alpha \lambda  V(\ou)=\lambda  V(\po)$, as desired. 

We now consider the second case where $\lambda=1$ and $0\leq\alpha\leq1$. 
First, note that for this case, we have $\min\{\lambda,\lambda\alpha+(1-\alpha)^2\}=\alpha+(1-\alpha)^2$ i.e. to show desired tightness, it suffices to construct an instance of FWI such that $V(\pa)=(\alpha+(1-\alpha)^2)V(\po)$. To this end, consider an instance with two solutions $S_1$ and $S_2$ with $V(S_1)=1$ and $V(S_2)=0$. Further, let us assume that the fair prior $\pf$ puts $\beta$ mass on $S_1$ and $1-\beta$ mass on $S_2$ for some $\beta\in[0,1]$. And hence, we have $V(\pf)=\beta$. Since, $\lambda=1$, we know that the mechanism $\mcl$ must outputs the solution $S_1$. This implies that we can write $V(\pa)=\alpha+(1-\alpha)\beta$ and $V(\po)=\min\{\alpha+\beta,1\}$. Now for $\beta=1-\alpha$ we get that $V(\po)=1$ and $V(\pa)=\alpha+(1-\alpha)^2=(\alpha+(1-\alpha)^2)V(\po)$, as desired. Since the above holds for any $\beta \in [0,1]$, it must also hold for any $\alpha \in [0,1]$ as well, thereby proving the stated claim.

\subsection{Ancillary fairness properties of \texorpdfstring{$\smix$}{Lg}}\label{sec:smix-ind}
For any instance $\instance$ of FWI, the distribution $\pa$ of the output of our algorithm $\smix$ has the following additional property: for any solution $i \in \Sol$ that is in the support of the fair prior $\pf$, i.e., $\pf_i>0$, $i$ is also in the support of $\pa$. This is also the case for $\emix$. However $\smix$ actually, also has the following stronger property that
\begin{align} \label{eq:ind-fair}
\pa_i \geq (1-\alpha) \cdot \pf_i
\end{align}for every solution $i \in \Sol$, with equality for $i\not=A$. 
This follows from the fact that $\smix$ outputs a sample drawn from the fair prior with probability $1-\alpha$. Therefore, in addition to being $\alpha$-fair, $\smix$ preserves the support of the given fair prior as well, hence providing \emph{individual fairness} guarantees. 

Now consider a setting where our FWI framework models the classic problem of allocating a set $[m]$ of $m$ indivisible items to a set $[n]$ of $n$ agents. We say an allocation is an $n$-partition of items into $n$ bundles, one for each agent. Hence, the set of solution space $\Sol$ consists of all possible allocations. An agent $a \in [n]$ has a valuation function $u_a: \Sol \rightarrow \mathbb{R}^+_0$ that defines her utility $u_a(i)$ from an allocation $i \in \Sol$. Note that a distribution vector $p \in \Delta$ is now a random allocation where an allocation $i$ is selected with probability $p_i$. With slight abuse of notation, for any $p \in \Delta$, we say her utility $u_a(p) = \sum_{i \in \Sol} p_i u_a(i)$ is the expected utility she derives from the associated random allocation to the distribution $p$. Finally, the value function $V$ can model social welfare, $V(p)=\sum_{a \in [n]}u_a(p)$ or Nash social welfare $V(i)= (\prod_{a \in [n]} u_a(p))^{1/n}$ of a random allocation associated with $p \in \Delta$. 

Using equation~(\ref{eq:ind-fair}), we conclude that for any agent $a \in [n]$, her utility in the distribution $\pa$ of $\smix$ is 
\begin{align*}
    u_a(\pa) &= \sum_{i \in \Sol} \pa_i \cdot u_a(i) \\
    & \geq (1- \alpha) \sum_{i \in \Sol} \pf_i \cdot u_a(i) = (1- \alpha) \cdot u_a(\pf)
\end{align*}
at least $(1-\alpha)$ times her utility in the fair prior, hence providing \emph{individual fairness} guarantees in addition to being $\alpha$ to fairness.

\section{Experiments}

We complement our theoretical analysis with an extensive practical evaluation.  
$\emix$ should theoretically produce better welfare than $\smix$. Nevertheless, the simplicity of $\smix$, as well as its ancillary individual fairness property begs the question of whether the theoretical performance gap between the two algorithms also exists in practice, or whether $\smix$ is the preferable algorithm.
We conduct an experimental study on two main scenarios: \emph{assignments} and \emph{sortition}. Both have natural ex-ante fair priors and represent practical use cases of our framework. We considered FWI instances both reflecting the ex-ante fair mechanism randomized round-robin (RRR) and sortition. 

For RRR, we considered the problem of matching papers to reviewers using the AAMAS  data set \cite{PrefLib_AAMAS}. Welfare is measured in terms of average reviewer preference for their assigned papers. 
Specifically, we model these problems as weighted $B$-matching in a bipartite graph with one side of the bipartition corresponding to reviewers and the other side corresponding to papers. 
RRR initially samples a random permutation. The reviewers then choose their papers greedy from the top of the available preference list according to the order of the permutation. The selection is repeated in a round-robin fashion until every paper has sufficiently many reviewers.

For sortition, we used the Adults data set \cite{UCI_adult}, and following the line of work by \cite{ebadiansortition}, we modeled welfare as voter representation in the generated committee of size $k$. Each individual is viewed as a point in Euclidean space. Representation is modeled as the (squared Euclidean) distance between an individual and its closest representative. The value of a committee of size $k$ is the likelihood of a mixture model of $k$ Gaussians each with an identity covariance matrix centered around the points associated with members of the selected committee. The sortition mechanism starts by selecting an initial committee such that the likelihood is maximized, followed by  replacing some committee members with randomly selected individuals.

Overall, it is  possible to generate a (synthetic) instance where there is a detectable gap between $\emix$ and $\smix$. For real-world instances, this gap typically vanishes, see Figure \ref{fig:means}.
Thus, we conclude that the $\smix$ algorithm, due to its simplicity, speed, and ancillary fairness property is, at least empirically, the algorithm to use.
\begin{figure*}[ht]
\hspace*{7mm}
\begin{tikzpicture}[scale=0.45]
\begin{axis}[
    title={Synthetic},
    title style={yshift=-3 pt,},
    title style = {font=\large},
    xlabel = {$\alpha$},
    ylabel = {Empirical Mean Score},
    table/col sep=comma,
    legend cell align = left,
    legend pos = north west,
    legend style={nodes={scale=1.3, transform shape},at={(1.45,1.3)}},
    grid = major,
    grid style=dashed,
    legend columns=4,
    xlabel style={font=\large},
    ylabel style={font=\large},
    legend cell align={left},
] 
\addplot[     line width=0.2mm, 
    color=blue,
    mark=triangle,
    mark options=solid,
    mark size=2pt
    ] table[x=alpha,y=means]{tikz/plot/change_beta/synthetic_simple_1_0.1_10batches.csv};
\addplot[     line width=0.2mm, 
    color=red,
    mark=square,
    mark options=solid,
    mark size=2pt
    ] table[x=alpha,y=means]{tikz/plot/change_beta/synthetic_eps_1_0.1_5batches.csv};
\addplot[     line width=0.2mm, 
    color=RePu04,
    mark=|,
    mark options=solid,
    mark size=3pt
    ] table[x=alpha,y=means]{tikz/plot/change_beta/synthetic_eps_1_0.05_5batches.csv}; 
\addplot[     line width=0.2mm, 
    color=RePu06,
    mark=pentagon,
    mark options=solid,
    mark size=2pt
    ] table[x=alpha,y=means]{tikz/plot/change_beta/synthetic_eps_1_0.01_5batches.csv};
\legend{$\smix$, $0.1-Mix$,$0.05-Mix$,$0.01-Mix$}
\end{axis}
\end{tikzpicture}
\hspace*{-69mm}
\begin{tikzpicture}[scale=0.45]
\begin{axis}[
    title={AAMAS 2015},
    title style={yshift=-3 pt,},
    title style = {font=\normalsize},
    xlabel = {$\alpha$},
    table/col sep=comma,
    legend cell align = left,
    legend pos = north west,
    grid = major,
    grid style=dashed,
    legend style={nodes={scale=1.3, transform shape},at={(0.45,1.55)}},
    legend columns=2,
    xlabel style={font=\normalsize},
    ylabel style={font=\normalsize},
    legend cell align={left},
] 
\addplot[     line width=0.2mm, 
    color=blue,
    mark=triangle,
    mark options=solid,
    mark size=1pt
    ] table[x=alpha,y=means]{tikz/plot/change_beta/AAMAS15_simple_50_0.1_5batches.csv};
\addplot[     line width=0.2mm, 
    color=red,
    mark=square,
    mark options=solid,
    mark size=1pt
    ] table[x=alpha,y=means]{tikz/plot/change_beta/AAMAS15_eps_50_0.1_5batches.csv};
\end{axis}
\end{tikzpicture}
\begin{tikzpicture}[scale=0.45]
\begin{axis}[
    title={AAMAS 2016},
    title style={yshift=-3 pt,},
    title style = {font=\large},
    xlabel = {$\alpha$},
    table/col sep=comma,
    legend cell align = left,
    legend pos = north west,
    legend style={nodes={scale=1.3, transform shape},at={(-0.45,1.3)}},
    legend columns=4,
    grid = major,
    grid style=dashed,
    xlabel style={font=\normalsize},
    ylabel style={font=\normalsize},
    legend cell align={left},
] 
\addplot[     line width=0.2mm, 
    color=blue,
    mark=triangle,
    mark options=solid,
    mark size=1pt
    ] table[x=alpha,y=means]{tikz/plot/change_beta/AAMAS16_simple_50_0.1_5batches.csv};
\addplot[     line width=0.2mm, 
    color=red,
    mark=square,
    mark options=solid,
    mark size=1pt
    ] table[x=alpha,y=means]{tikz/plot/change_beta/AAMAS16_eps_50_0.1_5batches.csv};
\end{axis}
\end{tikzpicture}
\begin{tikzpicture}[scale=0.45]
\begin{axis}[
    title={Adult},
    title style={yshift=-3 pt,},
    title style = {font=\normalsize},
    xlabel = {$\alpha$},
    table/col sep=comma,
    legend cell align = left,
    legend pos = north west,
    grid = major,
    grid style=dashed,
    legend style={nodes={scale=1.3, transform shape},at={(0.45,1.55)}},
    legend columns=2,
    xlabel style={font=\normalsize},
    ylabel style={font=\normalsize},
    legend cell align={left},
] 
\addplot[     line width=0.2mm, 
    color=blue,
    mark=triangle,
    mark options=solid,
    mark size=1pt
    ] table[x=alpha,y=means]{tikz/plot/adult_simple_50_0.1_5batches.csv};
\addplot[     line width=0.2mm, 
    color=red,
    mark=square,
    mark options=solid,
    mark size=1pt
    ] table[x=alpha,y=means]{tikz/plot/adult_eps_50_0.1_5batches.csv};
\end{axis}
\end{tikzpicture}
\caption{ The mean score for different scenarios of FWI with $\alpha\in(0,1]$ }
\label{fig:means}
\end{figure*}
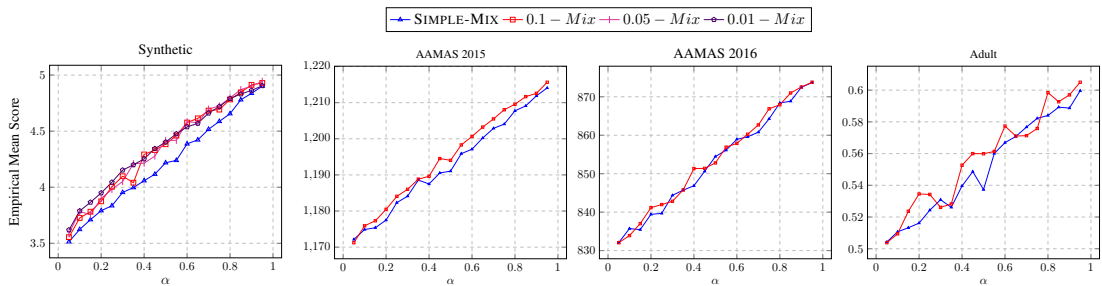

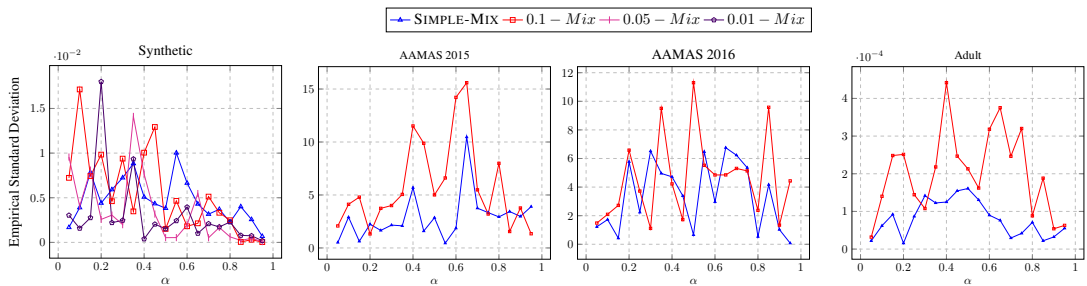
\begin{figure*}[ht!]
\hspace*{7mm}
\begin{tikzpicture}[scale=0.45]
\begin{axis}[
    title={Synthetic},
    title style={yshift=-3 pt,},
    title style = {font=\large},
    xlabel = {$\alpha$},
    ylabel = {Empirical Standard Deviation},
    table/col sep=comma,
    legend cell align = left,
    legend pos = north west,
    legend style={nodes={scale=1.3, transform shape},at={(1.45,1.3)}},
    grid = major,
    grid style=dashed,
    legend columns=4,
    xlabel style={font=\large},
    ylabel style={font=\large},
    legend cell align={left},
] 
\addplot[     line width=0.2mm, 
    color=blue,
    mark=triangle,
    mark options=solid,
    mark size=2pt
    ] table[x=alpha,y=variance]{tikz/plot/change_beta/synthetic_simple_1_0.1_10batches.csv};
\addplot[     line width=0.2mm, 
    color=red,
    mark=square,
    mark options=solid,
    mark size=2pt
    ] table[x=alpha,y=variance]{tikz/plot/change_beta/synthetic_eps_1_0.1_5batches.csv};
\addplot[     line width=0.2mm, 
    color=RePu04,
    mark=|,
    mark options=solid,
    mark size=3pt
    ] table[x=alpha,y=variance]{tikz/plot/change_beta/synthetic_eps_1_0.05_5batches.csv}; 
\addplot[     line width=0.2mm, 
    color=RePu06,
    mark=pentagon,
    mark options=solid,
    mark size=2pt
    ] table[x=alpha,y=variance]{tikz/plot/change_beta/synthetic_eps_1_0.01_5batches.csv};
\legend{$\smix$, $0.1-Mix$,$0.05-Mix$,$0.01-Mix$}
\end{axis}
\end{tikzpicture}
\hspace*{-70mm}
\begin{tikzpicture}[scale=0.45]
\begin{axis}[
    title={AAMAS 2015},
    title style={yshift=-3 pt,},
    title style = {font=\normalsize},
    xlabel = {$\alpha$},
    table/col sep=comma,
    legend cell align = left,
    legend pos = north west,
    grid = major,
    grid style=dashed,
    legend style={nodes={scale=1.3, transform shape},at={(0.45,1.55)}},
    legend columns=2,
    xlabel style={font=\normalsize},
    ylabel style={font=\normalsize},
    legend cell align={left},
] 
\addplot[     line width=0.2mm, 
    color=blue,
    mark=triangle,
    mark options=solid,
    mark size=1pt
    ] table[x=alpha,y=variance]{tikz/plot/change_beta/AAMAS15_simple_50_0.1_5batches.csv};
\addplot[     line width=0.2mm, 
    color=red,
    mark=square,
    mark options=solid,
    mark size=1pt
    ] table[x=alpha,y=variance]{tikz/plot/change_beta/AAMAS15_eps_50_0.1_5batches.csv};
\end{axis}
\end{tikzpicture}
\begin{tikzpicture}[scale=0.45]
\begin{axis}[
    title={AAMAS 2016},
    title style={yshift=-3 pt,},
    title style = {font=\large},
    xlabel = {$\alpha$},
    table/col sep=comma,
    legend cell align = left,
    legend pos = north west,
    legend style={nodes={scale=1.3, transform shape},at={(-0.45,1.3)}},
    legend columns=4,
    grid = major,
    grid style=dashed,
    xlabel style={font=\normalsize},
    ylabel style={font=\normalsize},
    legend cell align={left},
] 
\addplot[     line width=0.2mm, 
    color=blue,
    mark=triangle,
    mark options=solid,
    mark size=1pt
    ] table[x=alpha,y=variance]{tikz/plot/change_beta/AAMAS16_simple_50_0.1_5batches.csv};
\addplot[     line width=0.2mm, 
    color=red,
    mark=square,
    mark options=solid,
    mark size=1pt
    ] table[x=alpha,y=variance]{tikz/plot/change_beta/AAMAS16_eps_50_0.1_5batches.csv};
\end{axis}
\end{tikzpicture}
\hspace*{2mm}
\begin{tikzpicture}[scale=0.45]
\begin{axis}[
    title={Adult},
    title style={yshift=-3 pt,},
    title style = {font=\normalsize},
    xlabel = {$\alpha$},
    table/col sep=comma,
    legend cell align = left,
    legend pos = north west,
    grid = major,
    grid style=dashed,
    legend style={nodes={scale=1.3, transform shape},at={(0.45,1.55)}},
    legend columns=2,
    xlabel style={font=\normalsize},
    ylabel style={font=\normalsize},
    legend cell align={left},
] 
\addplot[     line width=0.2mm, 
    color=blue,
    mark=triangle,
    mark options=solid,
    mark size=1pt
    ] table[x=alpha,y=variance]{tikz/plot/adult_simple_50_0.1_5batches.csv};
\addplot[     line width=0.2mm, 
    color=red,
    mark=square,
    mark options=solid,
    mark size=1pt
    ] table[x=alpha,y=variance]{tikz/plot/adult_eps_50_0.1_5batches.csv};
\end{axis}
\end{tikzpicture}
\caption{ Empirical standard deviations for the corresponding plots in Figure~\ref{fig:means}}
\label{fig:std}
\end{figure*}

\subsection{Data sets}
For the bidding scenarios, we use a synthetic data set and a real-world bidding data set \cite{PrefLib_AAMAS}. For the sortition scenario, we use demographic data from UCI Adults \cite{UCI_adult}.

\textit{Synthetic:} 
We used a proof of concept synthetic data set that mimics a setup where $N_R$-many goods have to be distributed between $N_L$-many agents. The data is an outcome of $N_R\cdot N_L$-many uniform random variables on the interval $[0,1]$. These outcomes were now taken to be the weights of a fully connected weighted bipartite graph with $N_L$ nodes on the left and $N_R$ nodes on the right. For values of $N_R$ and $N_L$ can be seen in \Cref{tab:Datasets}.


\textit{AAMAS:} These two bidding data sets include partially the 2015 and 2016 reviewer's preference toward paper. The data consisted of each reviewer's preference list with labels "yes", "maybe", "no response" or "no". From this data, we created a complete weighted bipartite graph with the reviewers $N_L$ as left nodes and the $N_R$ papers as the right nodes (see \Cref{tab:Datasets} for the value of $N_R$ and $N_L$). The weights of the edges were decided by the label, respectively weighted $\{1,\sfrac{1}{2},0,0\}$.


\textit{UCI-Adult:} This is a typical data set used in the sortition scenarios. The data set contains a range of demographic features related to individuals. We choose the features \fet{age}, \fet{education year}, \fet{marital status}, \fet{relationship}, \fet{hours per week}, \fet{race} and \fet{sex}. After reducing duplicates, we retain $N_P=17749$ distinct individuals. We now defined a metric between points. We $1$-hotted the categorical features and interpreted the numerical features as coordinates in Euclidean space similar to \cite{ebadiansortition}.

\setlength{\extrarowheight}{1.7pt}
\begin{table}[ht]
    \centering
    \begin{tabular}{lccc}
        \toprule
          \textbf{Senario} & \multicolumn{3}{c}{\textbf{Dataset}} \\
        \midrule
        \multirow{4}{*}{Assignment}&&$N_{L}$ & $N_{R}$                  \\\cline{2-4}
        &AAMAS 2015  & 201     & 613        \\
        &AAMAS 2016 & 161      & 442        \\
        &Synthetic  & 100      & 5        \\
        \midrule
        \multirow{2}{*}{Sortition}&&  \multicolumn{2}{c}{\# Points}  \\\cline{2-4}
        &UCI-Adult& \multicolumn{2}{c}{17749}\\
        \bottomrule
    \end{tabular}
    \caption{Datasets}
    \label{tab:Datasets}
\end{table}

\subsection{Evaluation Details}
We ran $\smix$ and $\emix$ for $\alpha\in \{1/20,2/20,\ldots,19/20\}$ (and $\eps=0.1, 0.05, 0.01$). For each value of $\alpha$ (and $\eps$) $\smix$ and $\emix$ we repeated the experiment $N_{rounds}$ times and reported the empirical mean. Additionally, we reported the standard deviation of the empirical mean estimator over $N_{batch}$ runs(See \Cref{fig:std} for standard deviations). Further, the number of samples $N_{\eps}$ from the fair prior used by $\emix$ were only depending on $\eps$ for the Goods Bidding data. For the other cases, we choose a fixed sample number $N_\eps$. \Cref{tab:expsetup} displays the numbers $N_{rounds}$, $N_{batch}$ and $N_{\eps}$ in detail.
\begin{table*}[ht!]
    \centering
    \begin{tabular}{lcccccc}
        \toprule
        \textbf{Dataset}&  &$N_{\eps}$& \multicolumn{2}{c}{$N_{rounds}$}   & \multicolumn{2}{c}{$N_{batches}$} \\
        \midrule               
        &&&$\smix$ &$\emix$&$\smix$ &$\emix$\\\cline{4-7}
        AAMAS 2015    &    & 50    & 100  & 50   & 10 & 5   \\
        AAMAS 2016    &   & 50   & 100   & 50   & 10  & 5  \\
        \multirow{3}{*}{Synthetic}    & $\epsilon=0.1$ & $\lceil\tfrac{2397}{1-\alpha}\rceil$  &100 &50    &10 & 5\\
            & $\epsilon=0.05$ &$\lceil\tfrac{11805}{1-\alpha}\rceil$   &100 &50    &10 & 5\\
            & $\epsilon=0.01$ &$\lceil\tfrac{423865}{1-\alpha}\rceil$  &100 &50    &10 & 5\\
        UCI-Adult    &  & 50    & 20   & 10    &5 &5    \\
        \bottomrule
    \end{tabular}
    \caption{Specifications for the experiments.}
    \label{tab:expsetup}
\end{table*}

\subsubsection{FWI Instance: Resource allocation}
In this data set, the purpose is to assign a set of limited items to a set of agents. That is, each item must have one in-going edge, which defines our solution space $\Sol$. The value function $V$ was calculated as the sum of a set of 5 such edges. $\mmb$ \cite{10.1145/6462.6502} as welfare-maximizing mechanism $\mcl$, has an approximation factor of $1$. The fair mechanism $\pf$ was made by first choosing without replacement $N_R$ agents out of the $N_L$. Then, each chosen agent paired goods in order of their preference list.


\textbf{Result:} We refer to the first plot titled, Synthetic, in Figure~\ref{fig:means}. We see that the empirical mean of $\emix$ is higher than that of $\smix$ as expected from the analysis. We further notice that the gap between $\smix$ and $\emix$ increases for $\eps$ decreasing as expected, that is as $\emix$ converges towards the optimal solution.

\subsubsection{FWI Instance: Reviewers Matching}

To define the solution space $\Sol$ for the paper-matching data sets, we set constrain for the solution as: 1) each paper should get three reviewers.  2) the difference between the number of papers two reviewers got should be no more than one. The latter constraint holds if and only if, when reviewers receive no more than $\lceil 3\cdot N_R/N_L\rceil$ papers. Therefore, each reviewer will get no more than nine papers in AAMAS 2015, and ten in AAMAS 2016.  The value function $V$ was taken to be the sum of all selected edges in such a solution.


The welfare-maximizing mechanism $\mcl$ use $\grd$. This $\grd$ algorithm, proceeded as follows: Sorted all the edges by their weights. Picked the edge with the highest weight, if it still obeys the two constraints above. 


The fair prior was taken to be repeating the following procedure until each paper is assigned three reviewers: draw without replacement all the reviewers. Following the order in which the reviewers were drawn, we let the reviewers pick the paper they liked the most among the set of papers that didn't have three reviewers already.


\textbf{Result:} We refer to the second plot titled, AAMAS15, in Figure~\ref{fig:means}. Here, we see that the plot of the empirical means of $\emix$ and $\smix$ indicates that $\emix$ slightly out preforms $\smix$.  This pattern of slight outperformance of $\emix$ for the AMMAS16 graph is more muted if even present.

\subsubsection{FWI Instance: Sortition}
In this task, we aim to find a group of $N_K$ people representing the general public. Thus we define the solution space $\Sol$ to be the family of subsets consisting of $N_K$ individuals out of the entire set. The value function is described as follows.
We measure representation via the $N_K$-means objective, where, given a set $K$ of $N_K$ representatives, we consider the cost function $\log (\mathcal{L} (K|P)) = \sum_{p\in P}\min_{k\in K}\|p-k\|^2$, derived as the negative log likelihood of a mixture of $N_K$-univariate Gaussians with the the value function $$\displaystyle \mathcal{L}(K|P)) := \prod_{p\in P} \exp(-\min_{k\in K}\|p-k\|^2)\cdot C,$$
where $C$ is an absolute constant.
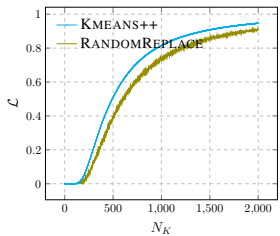
\begin{figure}[!h]
\hspace{18mm}
\begin{tikzpicture}[scale=0.45]
\begin{axis}[
    xlabel = {$N_K$},
    ylabel = {$\mathcal{L}$},
    table/col sep=comma,
    legend cell align = left,
    legend pos = north west,
    grid = major,
    grid style=dashed,
    legend style={nodes={scale=1.3, transform shape},fill=none,draw =none},
    legend columns=1,
    xlabel style={font=\large},
    ylabel style={font=\large},
    legend cell align={left},
] 
\addplot[     line width=0.2mm, 
    color=cyan,
    mark=|,
    mark size=0.5pt
    ] table[x=k,y=kmean++]{tikz/plot/adult_range_1_2000_clusters.csv};
\addplot[     line width=0.2mm, 
    color=olive,
    ] table[x=k,y=random]{tikz/plot/adult_range_1_2000_clusters.csv};
\legend{$\km$, $\rrq$}
\end{axis}
\end{tikzpicture}
    \caption{Number of the center $K$ affects Likelihood $\mathcal{L}$}
    \label{fig:cluster}
\end{figure}

We use $\rrq$ \cite{ebadiansortition}, a sortition mechanism, as fair prior $\pf$ and $\km$ \cite{Kmeanpp} as welfare-maximizing mechanism $\mcl$. We initialize the panel for $\rrq$ by $\km$ and set $q=\lfloor{\sfrac{3}{4}\cdot N_K}\rfloor$. Subsequently, we select $q$ centers and replace a chosen center $c$ with a point chosen uniformly at random from the $q$ closest neighbors of $c$.
We considered all ranges of $N_K$ from $1$ to $2000$. As $N_K$ increases, the evaluation score gradually approaches $1$, as Figure \ref{fig:cluster}. For the purpose of testing the fairness mechanism, in the $\alpha$ interval test in Figure \ref{fig:means}, we fixed $N_K=600$.

\textbf{Result:} 
In terms of representation, purely optimizing the likelihood always yielded better values than the sortition mechanism. Nevertheless, the sortition mechanism added input points into the set of candidate centers that the optimization algorithm consistently avoided, thus balancing out representation with the chance of certain individuals to be part of the selected set. As was the case with the other algorithms, our mechanisms seamlessly interpolated between the two solutions.

\bibliography{ref}

\end{document}